\newtheorem{thm}{Theorem}
\newtheorem{lemma}{Lemma}
\newenvironment{proof}[1][Proof]{\textbf{#1.} }{\ \rule{0.5em}{0.5em}}
\begin{document}
\title{A complexity dichotomy for the dominating set problem}
\author{D.S. Malyshev\footnote{National Research University Higher School of Economics, 25/12 Bolshaya Pecherskaya
Ulitsa, 603155, Nizhny Novgorod, Russia}}
\date{dsmalyshev@rambler.ru}

\maketitle

\begin{abstract}

We completely determine the complexity status of the dominating set
problem for hereditary graph classes defined by forbidden
induced subgraphs with at most five vertices.

\end{abstract}

\section{Introduction}

A \emph{coloring} is an arbitrary mapping
of colors to vertices of some graph. A graph coloring is said to be
\emph{proper} if no pair of adjacent vertices have the same color. The
\emph{chromatic number $\chi(G)$} of a graph $G$ is the
minimal number of colors in proper colorings of $G$. The \emph
{coloring problem}, for a given graph and a number $k$, is to
determine whether its chromatic number is at most $k$ or not. The
\emph{vertex $k$-colorability problem} is to verify whether vertices of a
given graph can be properly colored with at most $k$ colors. The edge
$k$-colorability problem is defined by analogy.

An \emph{independent set} and a \emph{clique} of a graph are sets
of pairwise non-adjacent and adjacent vertices, respectively.
The \emph{independent set problem} is to determine whether a given graph
contains an independent set with a given number of elements. The clique
problem is defined by analogy.

For a graph $G$, a subset $V'\subseteq V(G)$ \emph{dominates} $V''\subseteq V(G)$ if
each vertex of $V''\setminus V'$ has a neighbor in $V'$. A \emph{dominating set} of a graph $G$
is a subset dominating all its vertices. The size of a minimum dominating set of $G$
is said to be the \emph{domination number} of $G$ denoted by $\gamma(G)$. For a graph $G$
and a number $k$, the \emph{dominating set problem} is to decide whether
$\gamma(G)\leq k$ or not.

A \emph{class} is a set
of simple unlabeled graphs. A class of graphs is \emph{hereditary} if it is closed under deletion of vertices. It is
well-known that any hereditary (and only hereditary) graph class
${\mathcal X}$ can be defined by a set of its forbidden induced
subgraphs ${\mathcal Y}$. We write ${\mathcal X}=Free({\mathcal Y})$
in this case, and the graphs in
${\mathcal X}$ are said to be \emph{${\mathcal Y}$-free}. If ${\mathcal Y}=\{G\}$, then we will
write ``$G$-free'' instead of ``$\{G\}$-free''. If a hereditary class can be defined by a finite set
of forbidden induced subgraphs, then it is said to be \emph{finitely
defined}.

The coloring problem for $G$-free graphs is polynomial-time solvable
if $G$ is an induced subgraph of $P_4$ or $P_3+K_1$, and it is
NP-complete in all other cases \cite{KKTW01}. A similar result is
known for the dominating set problem. Namely, the problem is
polynomial-time solvable for $Free(\{G\})$ if $G=P_i+O_k$, where
$i\leq 4$ and $k$ is arbitrary, and it is NP-complete for all other choices of $G$ \cite{K90}.
The situation for the vertex $k$-colorability problem is not clear, even when only one induced subgraph
is forbidden. The complexity of the vertex 3-colorability problem is known
for all classes of the form $Free(\{G\})$ with $|V(G)|\leq 6$
\cite{BGPS12}. A similar result for $G$-free graphs with $|V(G)|\leq
5$ was recently obtained for the vertex 4-colorability problem
\cite{GPS13}. On the other hand, for fixed $k$, the complexity status of the
vertex $k$-colorability problem is open for $P_7$-free graphs $(k=3)$, for
$P_6$-free graphs $(k=4)$, and for $P_3+P_2$-free graphs
$(k=5)$.

The independent set problem is polynomial-time solvable for a hereditary class
defined by forbidden induced subgraphs with at most five vertices if and only if
a forest is one of the subgraphs, unless $P=NP$ \cite{LVV14,LM04}. A similar complete complexity
dichotomy was obtained in \cite{M14_3} for the edge 3-colorability problem. For the coloring problem, a complete classification for
pairs is open, even if forbidden induced subgraphs have at most four vertices.
Although, the complexity is known for some such pairs  \cite{GP13,LM14,M14_2,M15_1,MO15}.

We present a complete dichotomy for the dominating set problem in the family of hereditary
classes defined by forbidding induced subgraphs with at most five vertices in the paper.

\section{Notation}

We use the standard notation $P_n,O_n,K_n$ for a
simple path, an empty graph, and a complete graph with
$n$ vertices, respectively. A graph $K_{p,q}$ is a complete
bipartite graph with $p$ vertices in the first part and $q$ in
the second. A graph $fork$ is obtained from a $K_{1,3}$ by subdividing
an arbitrary its edge. A graph $orb$ is obtained from
a $K_4$ by adding a new vertex and an edge connecting the added vertex
to one vertex of a $K_4$. Similarly, a graph $sinker$ is obtained by
adding a vertex and two edges incident to the new vertex and two vertices of a $K_4$.
A graph $bull$ is obtained from a $P_5$ by connecting the second and fourth its vertices by an edge.
A graph $cricket$ is obtained from a $K_3$ by adding two vertices and two edges incident to
the new vertices and the same vertex of $K_3$. Graphs $dart$ and $kite$ are obtained from
a $K_4$ minus an edge by adding a vertex and an edge incident to the new vertex and to a degree
three or a degree two vertex, respectively. A graph $gem$ is obtained from a $P_4$ by adding
a new vertex and four edges incident to the new vertex and all vertices of $P_4$. A graph $hammer$
is obtained from a $fork$ by adding a new edge incident to two leaves adjacent to the degree three vertex.

A formula $N(x)$ denotes the neighborhood of a vertex $x$. A \emph{sum} $G_1+G_2$ is the
disjoint union of $G_1$ and $G_2$ with non-intersected sets of vertices. A \emph{product}
$G_1\times G_2$ of graphs with non-intersected sets of vertices is a graph $(V(G_1)\cup V(G_2), E(G_1)\cup E(G_2)\cup \{(v,u)|~v\in V(G_1),u\in V(G_2)\})$.
For a graph $G$ and $V'\subseteq V(G)$, a graph $G[V']$ is the subgraph of $G$ induced by $V'$.

We refer to textbooks in graph theory for graph terminology undefined here.

\section{Boundary graph classes for the dominating set problem}

A large number of results on polynomial-time solvability and NP-complete-ness
has been accumulated for many graph problems under various restrictions for
graph classes. When considering representative families of graph classes, one could set
more general problems than the complexity analysis of some concrete graph
problem for a given class of graphs. How to classify classes in a family
with respect to the computational complexity of a considered graph problem?
When does a difficult problem became easy? Is there a boundary separating
``easy'' and ``hard'' instances? The aim of this section is to present some tool
based on the notion of a boundary graph class giving a complete complexity
dichotomy in the family of finitely defined graph classes.

To solve any of the mentioned problems in the family of hereditary classes, a natural idea coming to mind is to consider a
phase transition between easy and hard hereditary classes under some natural statements of the easiness and hardness.
We use the following formal definitions.
For a given NP-complete graph problem $\Pi$, a hereditary class is said to be
\emph{$\Pi$-easy} if $\Pi$ can be polynomially solved for its graphs. A hereditary
class is \emph{$\Pi$-hard} if $\Pi$ is NP-complete for it. Unfortunately, the phase transition
approach seems to be unsuccessful.

Maximal $\Pi$-easy and minimal $\Pi$-hard classes are natural
boundary elements in the lattice of hereditary classes. It turns out that
the boundary may be absent at all. First, there are no maximal $\Pi$-easy classes, as
any $\Pi$-easy class ${\mathcal X}$ can be extended by adding a graph $G\not \in {\mathcal X}$ and
all proper induced subgraphs of $G$. Clearly, the resultant class is also $\Pi$-easy. Second,
minimal hard classes exist for some problems and do not exist for some others.
For a given graph and a positive length function on its edges, the \emph{travelling salesman problem} is to check
whether the minimum length of its Hamiltonian cycles is at most a given number or not.
It is NP-complete in the class of all complete graphs. Each proper hereditary subclass of the class is finite.
Hence, the class is a minimal hard case for the problem. On the other hand, for the
vertex and edge variants of the $k$-colorability problem, any hard class contains a proper hard subclass.
Indeed, if ${\mathcal Y}$ is a minimal hard case for the problem, then it must contain a graph $H$ that cannot
be properly colored in $k$ colors. Therefore, ${\mathcal Y}\setminus Free(\{H\})$ contains only graphs that
also cannot be properly colored in $k$ colors. There is a trivial polynomial-time algorithm to test whether a given
graph in ${\mathcal Y}$ belongs to ${\mathcal Y}\cap Free(\{H\})$. Hence, ${\mathcal Y}\cap Free(\{H\})$ must be
hard for the problem, and we have a contradiction. The phenomena of the absence of the boundary
we just considered was noticed in \cite{M09_2}.

So, to classify hereditary classes, we have to take into account that
the sets of easy and hard classes can be open with respect to the inclusion relation. In other words, there
may be infinite monotonically decreasing sequences of hard classes. Intuitively, the limits of such chains
should play a special role in the analysis of the complexity. This observation leads to the notion of a
boundary graph class. A class ${\mathcal X}$ is \emph{$\Pi$-limit} if there is
an infinite sequence ${\mathcal X_1}\supseteq {\mathcal X_2}
\supseteq \ldots$ of $\Pi$-hard classes such that ${\mathcal
X}=\bigcap\limits_{k=1}^{\infty}{\mathcal X_k}$. A $\Pi$-limit class that is minimal under inclusion is said to be \emph{$\Pi$-boundary}.
The following theorem shows the significance of the boundary class notion.

\begin{thm} $\cite{A04,ABKL07}$
A finitely defined class is $\Pi$-hard if and only if it includes some
$\Pi$-boundary class.
\end{thm}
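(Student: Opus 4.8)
The plan is to derive both implications from two elementary facts. The first is that $\Pi$-hardness is \emph{monotone} under inclusion: if $\mathcal{A}\subseteq\mathcal{B}$ are hereditary and $\Pi$ is NP-complete on $\mathcal{A}$, then it is NP-complete on $\mathcal{B}$, since every instance over $\mathcal{A}$ is also an instance over $\mathcal{B}$ and a polynomial algorithm for $\mathcal{B}$ would solve $\mathcal{A}$. The second is a finiteness principle: for each $k$ there are only finitely many graphs with at most $k$ vertices. For a hereditary class $\mathcal{Z}$ I will write $\mathcal{Z}^{(k)}$ for the finitely defined class obtained by forbidding only those minimal forbidden induced subgraphs of $\mathcal{Z}$ with at most $k$ vertices; then $\mathcal{Z}^{(1)}\supseteq\mathcal{Z}^{(2)}\supseteq\cdots$, each $\mathcal{Z}^{(k)}\supseteq\mathcal{Z}$, and $\bigcap_k\mathcal{Z}^{(k)}=\mathcal{Z}$.

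For the implication that a finitely defined class containing a $\Pi$-boundary class is $\Pi$-hard, let $\mathcal{X}=Free(\mathcal{Y})$ with $\mathcal{Y}$ finite and let $\mathcal{B}\subseteq\mathcal{X}$ be $\Pi$-boundary, so $\mathcal{B}=\bigcap_m\mathcal{H}_m$ for some descending hard classes $\mathcal{H}_m$. Each $H\in\mathcal{Y}$ lies outside $\mathcal{X}\supseteq\mathcal{B}$, hence outside $\mathcal{B}$, so $H\notin\mathcal{H}_{m_H}$ for some $m_H$; as $\mathcal{Y}$ is finite, the class $\mathcal{H}_{m^*}$ with $m^*=\max_H m_H$ contains no member of $\mathcal{Y}$ and, being hereditary, satisfies $\mathcal{H}_{m^*}\subseteq Free(\mathcal{Y})=\mathcal{X}$. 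Since $\mathcal{H}_{m^*}$ is hard, monotonicity yields that $\mathcal{X}$ is hard.

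For the converse I would first observe that any hard class $\mathcal{X}$ is trivially $\Pi$-limit via the constant sequence, so the family $\mathfrak{L}$ of $\Pi$-limit classes contained in $\mathcal{X}$ is nonempty. I would then apply Zorn's lemma to $\mathfrak{L}$ ordered by reverse inclusion: a minimal element under inclusion contains no smaller limit class and hence is exactly a $\Pi$-boundary class contained in $\mathcal{X}$. This reduces the entire direction to the key lemma that every descending chain in $\mathfrak{L}$ has a lower bound in $\mathfrak{L}$, that is, that the intersection $\mathcal{Z}=\bigcap_\alpha\mathcal{Z}_\alpha$ of a chain of limit classes is again limit.

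This last point is the main obstacle, and the place where a naive diagonalization fails: writing each $\mathcal{Z}_\alpha$ as a descending intersection of hard classes and diagonalizing would force one to intersect finitely many hard classes, and such an intersection need not be hard. The plan is to bypass this using monotonicity. The crucial lemma is that for a \emph{limit} class $\mathcal{Z}$ every truncation $\mathcal{Z}^{(k)}$ is itself hard: writing $\mathcal{Z}=\bigcap_m\mathcal{H}_m$, each of the finitely many forbidden subgraphs of $\mathcal{Z}$ of size at most $k$ is absent from some $\mathcal{H}_m$, so a single class $\mathcal{H}_{m^*}$ far enough along the sequence already forbids all of them and therefore satisfies $\mathcal{H}_{m^*}\subseteq\mathcal{Z}^{(k)}$; monotonicity then makes $\mathcal{Z}^{(k)}$ hard. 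Granting this, the intersection of a chain is handled level by level: for each $k$ the finiteness principle makes the chain $\{\mathcal{Z}_\alpha\}$ stabilize on graphs with at most $k$ vertices, so $\mathcal{Z}$ agrees there with some single member $\mathcal{Z}_{\alpha(k)}$, whence $\mathcal{Z}^{(k)}=\mathcal{Z}_{\alpha(k)}^{(k)}$ is hard. Thus $\mathcal{Z}=\bigcap_k\mathcal{Z}^{(k)}$ exhibits $\mathcal{Z}$ as a descending intersection of hard classes, so $\mathcal{Z}\in\mathfrak{L}$, which supplies the lower bound required by Zorn's lemma and completes the argument.
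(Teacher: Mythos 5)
The paper does not prove this theorem itself; it is quoted from the cited works of Alekseev et al., and your argument is in essence the proof given there: the forward implication via monotonicity of NP-hardness under inclusion together with the finiteness of the forbidden set, and the converse via Zorn's lemma applied to the limit classes inside ${\mathcal X}$, with the key lemma that every finite truncation of a limit class is itself hard. Your reconstruction is correct, including the one genuinely delicate point --- that a chain of limit classes intersects to a limit class, handled by level-by-level stabilization on graphs of bounded order rather than by a naive diagonalization.
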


The theorem says that knowledge of all $\Pi$-boundary classes gives
a complete complexity dichotomy in the family of finitely defined classes. Moreover,
its corollary is a ``zero-one law'' claiming that there is no finitely defined classes with
an intermediate (i.e., distinct to polynomial-time solvability and NP-completeness) complexity.
One more interesting fact is that there is a boundary class for each NP-complete graph problem,
as the set of all graphs is finitely defined.

The notion was originally introduced by V.E. Alekseev for the independent set problem
\cite{A04}. It was later applied for the dominating set problem
\cite{AKL04}. Nowadays, boundary classes are known for several algorithmic
 graph problems \cite{A04,ABKL07,AKL04,KLMT11,LP13,M09_2,M13,M14_1,M15_2}.

Assuming $P\neq NP$, four concrete graph classes are known to be boundary for the dominating set problem \cite{AKL04,M15_2}.
The first of them is ${\mathcal S}$. It constitutes all forests with at
most three leaves in each connected component. The second one is
${\mathcal T}$ which is the set of line graphs of graphs in
${\mathcal S}$. To define two remaining classes, we need to define
two operators acting on graphs.

For a graph $G=(V,E)$,
a graph $Q(G)$ has vertex set $V\cup E$ and edge set $
\{(v_i,v_j)|~v_i,v_j\in V\}\cup \{(v,e)|~v\in V,e\in E,v~\text{is incident to}~e\}$.
A class ${\mathcal Q}$ is the set $\{G|~\exists H\in {\mathcal S},G=Q(H)\}$ plus
the set of all induced subgraphs of all its graphs. Let $G=(V,E)$ be a graph having degrees of vertices at most three. Let $V'$ be the set of degree three vertices of $G$ and $V''\triangleq V(G)\setminus V'$. We define a graph $Q^*(G)$ as follows. The set $V(Q^*(G))$ coincides with
$V''\cup E$. A vertex $x\in V'$ is incident to edges $e_1(x),e_2(x),e_3(x)$ in the graph $G$.
The set $E(Q^*(G))$ coincides with $\{(v_i,v_j)|~v_i,v_j\in V''\}\cup \{(v,e)|~v\in V'',e\in E,v~\text{is incident to}~e\}\cup
\bigcup\limits_{x\in V'}\{(e_1(x),e_2(x)),(e_1(x),e_3(x)),(e_2(x),e_3(x))\}$. A class ${\mathcal Q^*}$ is the set $\{G|~\exists H\in {\mathcal S},G=Q^*(H)\}$ plus
the set of all induced subgraphs of all its graphs.

Taking into account Theorem 1, a necessary condition for a finitely defined class to be an easy case for the dominating set problem
is not to include each of the classes ${\mathcal S},{\mathcal T},{\mathcal Q},{\mathcal Q^*}$, unless $P=NP$. Sometimes, they give
a criterion. The following result was obtained in \cite{M15_2}.

\begin{thm}
Let ${\mathcal Y}$ be a set of graphs with at most five vertices. If $P_5\in {\mathcal Y}$, then
the dominating set problem is polynomial-time solvable for $Free({\mathcal Y})$ if $Free({\mathcal Y})\nsupseteq {\mathcal Q}$,
otherwise it is NP-complete for $Free({\mathcal Y})$.
\end{thm}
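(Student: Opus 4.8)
The plan is to handle the two directions separately; the NP-complete direction is essentially free, while the polynomial direction reduces to a finite case analysis. First I would dispose of the hard direction. Since every graph in $\mathcal{Y}$ has at most five vertices, $\mathcal{Y}$ is finite and $Free(\mathcal{Y})$ is finitely defined. If $Free(\mathcal{Y})\supseteq \mathcal{Q}$, then $Free(\mathcal{Y})$ includes the boundary class $\mathcal{Q}$, so Theorem 1 yields NP-completeness at once. For the polynomial direction, note that since $\mathcal{Q}$ is hereditary, $Free(\mathcal{Y})\nsupseteq \mathcal{Q}$ is equivalent to the existence of a graph $Y\in\mathcal{Y}$ with $Y\in\mathcal{Q}$. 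Fixing such a $Y$ we have $Free(\mathcal{Y})\subseteq Free(\{P_5,Y\})$, so it suffices to solve the dominating set problem in polynomial time on $Free(\{P_5,Y\})$. Because appending an isolated vertex to a graph of $\mathcal{Q}$ keeps it in $\mathcal{Q}$, every graph of $\mathcal{Q}$ on at most five vertices is an induced subgraph of a five-vertex graph of $\mathcal{Q}$; as forbidding a smaller subgraph yields a smaller class, it is enough to prove polynomial-time solvability of $Free(\{P_5,Y\})$ for every \emph{five-vertex} $Y\in\mathcal{Q}$.

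The next step is to pin down these $Y$. Reading off the definition of $Q(H)$, a graph lies in $\mathcal{Q}$ precisely when its vertices split into a clique $A$ and an independent set $B$ such that each vertex of $B$ has at most two neighbours in $A$, and the attachment of $B$ to $A$ is the one induced by a forest of maximum degree three with at most three leaves per component (the graph $H$); in particular the two-neighbour vertices of $B$ impose a forest on $A$. Two immediate consequences, useful for pruning, are that every graph of $\mathcal{Q}$ is $P_5$-free and $C_4$-free. Enumerating the five-vertex graphs obeying these constraints produces a short explicit list, which one checks to contain $K_5$, $orb$, $sinker$, $gem$, $dart$, $kite$, $bull$, $cricket$, $fork$, $K_{1,4}$ and a few disconnected graphs, and to exclude, for instance, $hammer$. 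These are exactly the classes $Free(\{P_5,Y\})$ I must show to be easy.

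For each such $Y$ I would then supply a polynomial algorithm, and the technique depends on $Y$. The easiest case is $Y=K_5$: by the Bacs\'o--Tuza theorem a connected $P_5$-free graph has a dominating clique or a dominating induced $P_3$, and in a $K_5$-free graph a clique has at most four vertices, so every connected $(P_5,K_5)$-free graph satisfies $\gamma\le 4$ and a minimum dominating set is found by exhaustive search over subsets of size at most four, component by component. For several of the non-complete $Y$ I would instead show that $Free(\{P_5,Y\})$ has bounded clique-width and then use that the dominating set problem is polynomial on any class of bounded clique-width.

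The main obstacle is the remaining non-complete $Y$, such as $fork$: here $Free(\{P_5,Y\})$ still contains graphs of unbounded domination number -- the graphs $Q(P_n)$ are connected, $P_5$-free, satisfy $\gamma(Q(P_n))=\Theta(n)$, and (one verifies) are $fork$-free -- so neither the bounded-$\gamma$ search nor a blanket clique-width bound is available. For these classes I would use that a connected $P_5$-free graph has diameter at most three, together with the structure that $Y$-freeness forces on the neighbourhoods, to build a modular or otherwise bounded decomposition on which a minimum dominating set can be computed directly. Establishing the completeness of the five-vertex list and producing a correct polynomial algorithm for each of these resistant classes is where essentially all of the work resides.
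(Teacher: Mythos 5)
First, a point of reference: the paper does not prove this statement itself --- Theorem~2 is imported from \cite{M15_2} --- so there is no in-paper proof to compare you against directly. Judged on its own terms, your skeleton is sound and matches the strategy the paper uses for its own Theorem~3: the hard direction is immediate from Theorem~1 since $Free({\mathcal Y})$ is finitely defined and ${\mathcal Q}$ is boundary; for the easy direction, hereditarity of ${\mathcal Q}$ correctly produces a graph $Y\in {\mathcal Y}\cap {\mathcal Q}$ with $Free({\mathcal Y})\subseteq Free(\{P_5,Y\})$; and your enumeration of the relevant five-vertex members of ${\mathcal Q}$ agrees with the list the paper itself extracts in the proof of Theorem~3 ($K_5$, $orb$, $sinker$, $gem$, $dart$, $kite$, $bull$, $cricket$, $fork$, $K_{1,4}$, together with graphs whose unique non-trivial component has at most four vertices, e.g.\ induced subgraphs of $P_4+O_5$). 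The $K_5$ case via Bacs\'o--Tuza plus brute force over sets of size at most four is a complete argument.

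The gap is everything after that. For every non-complete $Y$ on the list you give no algorithm: the clique-width route is asserted for unspecified classes with no proof and cannot simply be taken for granted (these classes contain split graphs with a clique attached to a large independent set, exactly the kind of structure for which clique-width bounds routinely fail), and for the ``resistant'' classes such as $Free(\{P_5,fork\})$, $Free(\{P_5,gem\})$, $Free(\{P_5,K_{1,4}\})$ you offer only ``a modular or otherwise bounded decomposition,'' which is precisely the content of the theorem; by your own admission essentially all of the work resides there, and none of it is done. There is also a concrete factual error in the one place you do verify something: $Q(P_n)$ is \emph{not} fork-free for $n\geq 5$ --- writing $e_i$ for the edge $v_iv_{i+1}$, the vertices $v_2,e_1,e_2,v_5,e_5$ of $Q(P_6)$ induce a fork ($v_2$ is adjacent to $e_1,e_2,v_5$ and $v_5$ to $e_5$, with no other edges among them). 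Your intended point, that $Free(\{P_5,fork\})$ has connected members of unbounded domination number, is nevertheless true (take the clique $v_1,\ldots,v_n$ with independent vertices $b_1,\ldots,b_n$, where $b_i$ is adjacent only to $v_i$), so bounded-$\gamma$ search indeed cannot work there; but the proof of polynomial-time solvability for these classes, which is the substance of the statement, is missing.
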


In this paper, we extend Theorem 2 by giving a criterion for all possible subsets of forbidden
induced subgraphs with at most five vertices. Namely, such a class is hard for the problem
if it includes ${\mathcal S}$ or ${\mathcal T}$ or ${\mathcal Q}$, otherwise it is easy.

\section{The basic idea and the first steps of its implementation}

An \emph{independent dominating set} of a graph $G$ is a subset $V'\subseteq V(G)$ which
is an independent set of $G$ and a dominating set of $G$, simultaneously. The
size of a minimum independent dominating set of $G$ is said to be the \emph{independent domination number} of
$G$ denoted by $i(G)$.

Let $G$ be a connected $P_3+P_2$-free graph, $x$ and $y$ be its adjacent vertices.
Let $G_{xy}$ be the induced subgraph of $G$ obtained by deleting $x$ and $y$ simultaneously. Its
vertex set can be partitioned into two parts $A_{xy}$ and $B_{xy}$, where $A_{xy}\triangleq \{z\in V(G_{xy})|~ z\in N(x)\cup N(y)\}$.
Let $\gamma'(G_{xy})$ be the minimum cardinality of subsets of $V(G_{xy})$ that dominate $B_{xy}$. Clearly,
$\gamma(G)=\min(i(G),2+\min\limits_{xy\in E(G)}\gamma'(G_{xy}))$.

The independent domination number can be computed in polynomial time for $P_3+P_2$-free graphs \cite{LMP15}. Therefore, to show polynomial-time
solvability of the dominating set problem in a subclass ${\mathcal X}\subseteq Free(\{P_3+P_2\})$, it is sufficient to
compute $\gamma'(G_{xy})$ in time bounded by a polynomial on $|V(G)|$ for every $G\in {\mathcal X}$ and each edge $xy$ of $G$.
This reduction is our basic idea.

Clearly, $G[B_{xy}]$ is $P_3$-free, i.e. it is the disjoint union of complete graphs. If a vertex $v \in B_{xy}$ has no neighbors in $A_{xy}$,
then any dominating set of $G$ must contain an element of the clique of $G[B_{xy}]$ containing $v$. Removing this clique produces an induced
subgraph $H$ of the graph $G$ such that $\gamma'(H_{xy})=\gamma'(G_{xy})-1$. This is why we shall always assume that each element
of $B_{xy}$ has a neighbor in $A_{xy}$, since computing $\gamma'(G_{xy})$ can be polynomially reduced to this case. Let
$\gamma''(G_{xy})$ be the minimum cardinality of subsets of $A_{xy}$ that dominate $B_{xy}$, and let $k_{xy}$ be the number
of connected components of $G[B_{xy}]$.

\begin{lemma}
If $\gamma(G)\geq 4$, then $\gamma'(G_{xy})=\min(\gamma''(G_{xy}),k_{xy})$.
\end{lemma}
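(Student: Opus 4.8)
The plan is to prove the two inequalities $\gamma'(G_{xy})\leq \min(\gamma''(G_{xy}),k_{xy})$ and $\gamma'(G_{xy})\geq \min(\gamma''(G_{xy}),k_{xy})$ separately. The first is immediate: any subset of $A_{xy}$ dominating $B_{xy}$ is in particular a subset of $V(G_{xy})$ dominating $B_{xy}$, so $\gamma'(G_{xy})\leq \gamma''(G_{xy})$; and choosing one vertex from each of the $k_{xy}$ cliques of $G[B_{xy}]$ dominates $B_{xy}$, so $\gamma'(G_{xy})\leq k_{xy}$. For the reverse inequality I would fix a minimum set $D\subseteq V(G_{xy})$ dominating $B_{xy}$, write $D_A=D\cap A_{xy}$ and $D_B=D\cap B_{xy}$, and observe that, since distinct cliques of $G[B_{xy}]$ have no edges between them and only $B_{xy}$ must be dominated, a minimum $D$ contains at most one vertex of each clique. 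Call a clique \emph{internal} if $D$ meets it and \emph{external} otherwise; writing $t$ for the number of internal cliques, we have $|D_B|=t$, every external clique is dominated by $D_A$, and the hypothesis $\gamma(G)\geq 4$ gives $\gamma'(G_{xy})\geq \gamma(G)-2\geq 2$, so at least two cliques are present.

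The crux is the following structural consequence of $P_3+P_2$-freeness. I claim that if $a\in A_{xy}$ has a neighbor in some clique $C'$ of $G[B_{xy}]$, then for every other clique $C$ the vertex $a$ is non-adjacent to at most one vertex of $C$. Indeed, suppose $a$ were non-adjacent to two vertices $b,c$ of $C$; since $C$ is a clique, $b$ and $c$ are adjacent. Pick a neighbor $z$ of $a$ in $C'$ and a vertex $w\in\{x,y\}$ with $a\in N(w)$. Then $w,a,z$ induce a $P_3$ (we have $w\sim a$, $a\sim z$, and $w\not\sim z$ because $z\in B_{xy}$), while $b,c$ induce a $P_2$; moreover $w$ and $z$ are non-adjacent to $b$ and $c$ (as $w\in\{x,y\}$ is non-adjacent to all of $B_{xy}$, and $z$ lies in a clique different from $C$), and $a\not\sim b,c$ by assumption. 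This produces an induced $P_3+P_2$, a contradiction. Hence each $a\in A_{xy}$ that touches one clique covers every other clique up to at most one vertex.

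With the claim in hand I would finish by cases. If $t=0$ then $D=D_A\subseteq A_{xy}$, so $|D|\geq \gamma''(G_{xy})$. If $t\geq 1$ but $D_A=\emptyset$, then no external clique can be dominated, so every clique is internal, $t=k_{xy}$, and $|D|=k_{xy}$. In the remaining case $t\geq 1$ and $D_A\neq\emptyset$, minimality of $D$ forces every $a\in D_A$ to dominate some vertex of an external clique, since its neighbors in internal cliques are already dominated by the chosen internal vertices; thus every $a\in D_A$ has a neighbor in an external clique. Fixing one such $a^*\in D_A$ and applying the claim, $a^*$ covers each internal clique up to at most one missed vertex; for each internal clique $C_i$ on which $a^*$ misses a vertex $c_i$, I would adjoin to $D_A$ one $A_{xy}$-neighbor $a_i$ of $c_i$, which exists by the standing assumption that each element of $B_{xy}$ has a neighbor in $A_{xy}$. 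The resulting subset of $A_{xy}$ dominates all of $B_{xy}$ and has size at most $|D_A|+t=|D|$, whence $\gamma''(G_{xy})\leq \gamma'(G_{xy})$. In every case $|D|\geq \min(\gamma''(G_{xy}),k_{xy})$, giving the reverse inequality and hence equality.

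The main obstacle is isolating and proving the structural claim: the whole argument hinges on converting $P_3+P_2$-freeness into the statement that a vertex of $A_{xy}$ touching one clique almost-covers every other clique, and on checking that the five chosen vertices genuinely induce $P_3+P_2$ (in particular that $w\not\sim z$ and $a\not\sim b,c$). The secondary delicate point is the counting in the last case: one must verify that replacing the internal vertices of $D_B$ by the neighbors $a_i$ does not push the size past $|D_A|+t$, which works precisely because $a^*$ already covers all but at most one vertex of each internal clique, so at most $t$ new vertices are introduced. The hypothesis $\gamma(G)\geq 4$ enters only to rule out the degenerate situations $\gamma'(G_{xy})\leq 1$ (equivalently $k_{xy}\leq 1$), where there are too few cliques for the forbidden-subgraph argument to operate and the identity holds trivially.
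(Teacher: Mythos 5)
Your proposal is correct and follows essentially the same route as the paper: the structural claim that a vertex of $A_{xy}$ with a neighbor in one clique misses at most one vertex of every other clique (via the induced $P_3+P_2$ on $\{x\text{ or }y,a,z\}\cup\{b,c\}$) is exactly the paper's argument that each undominated piece of a component is a single vertex, and the final replacement of the $B_{xy}$-vertices of the dominating set by $A_{xy}$-neighbors matches the paper's swap. Your version is, if anything, laid out more carefully (the explicit case split on $t$ and $D_A$), but it is the same proof.
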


\begin{proof} A connected component of $G[B_{xy}]$ is \emph{non-trivial} if it has at least two vertices. Otherwise, it is
said to be \emph{trivial}. As $\gamma(G)\geq 4$, then $G[B_{xy}]$ has at least two connected components. Clearly,
$\gamma'(G_{xy})\leq \min(\gamma''(G_{xy}),k_{xy})$. Let $D_{xy}$ be a minimum set of $V(G_{xy})$ dominating $B_{xy}$. It must have
at least two elements. If this set contains no elements of $A_{xy}$, then $|D_{xy}|=k_{xy}=\gamma(G'_{xy})$.
Therefore, we may assume that $D_{xy}\cap A_{xy}\neq \emptyset$. Let us reconstruct $D_{xy}$ as follows. If it contains a vertex $b$ belonging to
a trivial component of $G[B_{xy}]$, then a new value of $D_{xy}$ becomes equal to the old one minus $\{a\}$ plus $\{b\}$, where
$a\in A_{xy}$ is an arbitrary neighbor of $b$. After this process, $D_{xy}$ is still a minimum set dominating $B_{xy}$. Let
$G_1,G_2,\ldots,G_s$ be all maximal induced subgraphs of connected components of $G[B_{xy}]$ such that every vertex of each subgraph
has no a neighbor in $D_{xy}\cap A_{xy}$. Each subgraph has exactly one vertex. Indeed, for any $i\in \overline{1,s}$, there are a vertex $z=z(i)\in D_{xy}\cap A_{xy}$
and a connected component $K=K(i)$ of $G[B_{xy}]$ such that $V(G_i)\cap V(K)=\emptyset$ and $z$ has a neighbor $z'\in V(K)$. If $|V(G_i)|\geq 2$, then
$z',z,x$ or $y$, and any two elements of $V(G_i)$ induce a $P_2+P_3$. Hence, $|V(G_i)|=|V(G_i)\cap D_{xy}|=1$ for each $i$. The element $b_i$ of
$V(G_i)\cap D_{xy}$ has a neighbor $a_i\in A_{xy}$. Therefore, $(D_{xy}\setminus \bigcup\limits_{i=1}^{s} \{b_i\})\cup \bigcup\limits_{i=1}^{s} \{a_i\}$
is a subset of $A_{xy}$ with $\gamma'(G_{xy})$ vertices dominating $B_{xy}$. Hence, $\gamma''(G_{xy})\leq \gamma'(G_{xy})$, i.e. $\gamma''(G_{xy})=\gamma'(G_{xy})$.  \end{proof}\\

Let $A'_{xy}$ be the set of those elements $z\in A_{xy}$ having a neighbor in $B_{xy}$ that $B_{xy}\setminus N(z)$ is independent,
$H^z\triangleq G\setminus (\{z\}\cup N(z)\cap B_{xy})$ for $z\in A'_{xy}$.

\begin{lemma} If $\gamma(G)\geq 4$, then $\gamma''(G_{xy})=\min\limits_{z\in A'_{xy}}\gamma''(H^{z}_{xy})+1$.
\end{lemma}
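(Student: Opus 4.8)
The plan is to prove the two inequalities separately, putting essentially all of the work into the lower bound. For the upper bound $\gamma''(G_{xy})\le \min_{z\in A'_{xy}}\gamma''(H^{z}_{xy})+1$, I would fix any $z\in A'_{xy}$, take a minimum subset $F\subseteq A_{xy}\setminus\{z\}$ dominating $B_{xy}\setminus N(z)$ (by definition such an $F$ realizes $\gamma''(H^{z}_{xy})$), and observe that $F\cup\{z\}\subseteq A_{xy}$ dominates all of $B_{xy}$: the vertex $z$ covers $N(z)\cap B_{xy}$ while $F$ covers the complementary part $B_{xy}\setminus N(z)$. Since $z\notin F$, this yields $\gamma''(G_{xy})\le |F|+1=\gamma''(H^{z}_{xy})+1$, and minimizing over $z$ gives the inequality. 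This step is routine and uses neither $P_3+P_2$-freeness nor the hypothesis $\gamma(G)\ge 4$.

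For the lower bound I would start from a minimum set $D\subseteq A_{xy}$ dominating $B_{xy}$, with $|D|=\gamma''(G_{xy})$, and reduce everything to one structural claim: \emph{any such $D$ satisfies $D\cap A'_{xy}\ne\emptyset$.} Indeed, once some $z\in D\cap A'_{xy}$ is available, every $b\in B_{xy}\setminus N(z)$ is non-adjacent to $z$ and hence is dominated by a vertex of $D$ other than $z$; thus $D\setminus\{z\}\subseteq A_{xy}\setminus\{z\}$ dominates $B_{xy}\setminus N(z)$, so this set is a feasible solution for $H^{z}_{xy}$ and $\gamma''(H^{z}_{xy})\le|D|-1=\gamma''(G_{xy})-1$. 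This gives $\gamma''(G_{xy})\ge \gamma''(H^{z}_{xy})+1\ge \min_{z\in A'_{xy}}\gamma''(H^{z}_{xy})+1$, and at the same time certifies that $A'_{xy}\ne\emptyset$, so the minimum on the right-hand side is well defined.

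The heart of the matter, and the step I expect to be the main obstacle, is the structural claim, which is exactly where $P_3+P_2$-freeness is used. Recall that $G[B_{xy}]$ is a disjoint union of cliques and, because $\gamma(G)\ge 4$, has at least two of them (the same count as in the proof of the previous lemma). I would first classify those $z\in A_{xy}$ that have a neighbor in $B_{xy}$ yet fail to lie in $A'_{xy}$; for such a ``bad'' $z$ the set $B_{xy}\setminus N(z)$ is not independent, so it contains two adjacent vertices $b,b'$ of a common clique with $z\not\sim b,b'$. If a bad $z$ were adjacent to exactly one of $x,y$, say to $x$, then $\{z,x,y,b,b'\}$ would induce a $P_3+P_2$ (the path through $z,x,y$ together with the edge $bb'$); and if $z$ were adjacent to both $x,y$ but had a neighbor $c$ in a clique different from the one containing $b,b'$, then $\{c,z,x,b,b'\}$ would induce a $P_3+P_2$ (the path through $c,z,x$ together with $bb'$). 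Hence every bad vertex must be adjacent to both $x$ and $y$ and have all of its $B_{xy}$-neighbors inside a single clique.

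Finally I would rule out a minimum $D$ consisting solely of bad vertices. Any bad $z\in D$ misses two adjacent vertices, so some clique $C^{*}$ contains two vertices $b,b'$; since there are at least two cliques, fix another clique $C^{\circ}\ne C^{*}$. As $D$ dominates $C^{\circ}$, some $z\in D$ has a neighbor $c\in C^{\circ}$; being bad, this $z$ is adjacent to both $x,y$ and has all its $B_{xy}$-neighbors inside one clique, which must be $C^{\circ}$, so $z\not\sim b,b'$. Then $\{c,z,x,b,b'\}$ induces a $P_3+P_2$, a contradiction; therefore $D$ meets $A'_{xy}$, completing the lower bound. The only delicate points, and the ones on which I would spend the most care, are verifying that the five listed vertices are always distinct and that all the asserted non-edges genuinely hold, namely that vertices of $B_{xy}$ are non-adjacent to $x$ and $y$ and that vertices in distinct cliques of $G[B_{xy}]$ are non-adjacent.
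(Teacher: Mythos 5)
Your proof is correct and follows essentially the same route as the paper's: both arguments reduce the statement to showing that a minimum subset of $A_{xy}$ dominating $B_{xy}$ must contain an element of $A'_{xy}$, and both obtain this by confining the $B_{xy}$-neighborhood of each ``bad'' vertex to a single clique of $G[B_{xy}]$ and then extracting an induced $P_3+P_2$ using two distinct cliques (which exist since $\gamma(G)\geq 4$). You spell out the two inequalities and the well-definedness of the minimum more explicitly than the paper does, but the underlying ideas coincide.
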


\begin{proof}
Since $\gamma(G)\geq 4$, the graph $G[B_{xy}]$ has at least two connected components.
If a minimum subset of $A_{xy}$ dominating $B_{xy}$ contains an element of $A'_{xy}$,
then $\gamma''(G_{xy})=\min\limits_{z\in A'_{xy}}\gamma''(G^{z}_{xy})+1$.
Hence, we need to assume that none element of such a subset $D_{xy}$ is an element
of $A'_{xy}$. Since $\gamma(G)\geq 4$, this subset must contain at least two elements.
To avoid an induced $P_3+P_2$, $z$ must have neighbors only in one connected component
 of $G[B_{xy}]$ for any $z\in D_{xy}$. Indeed, two adjacent elements of $B_{xy}\setminus N(z)$,
$z$, some its neighbor in $V(G[B_{xy}]), x$ or $y$ induce a $P_3+P_2$ otherwise. Let $z_1\in D_{xy}$ and
$z_2\in D_{xy}$ be vertices having neighbors $z'_1$ and $z'_2$ in distinct connected components of $G[B_{xy}]$.
If $a$ and $b$ are adjacent elements of $B_{xy}\setminus N(z_1)$, then $z'_1,a,b$ form a clique. Otherwise, $G$ is not $P_3+P_2$-free.
Clearly, $(a,z_2)\not \in E(G)$ and $(b,z_2)\not\in E(G)$. Hence, $a$ and $b$, $z'_2$, $z_2$, $x$ or
$y$ induce a $P_3+P_2$. We have a contradiction with the assumption.
\end{proof}\\

Now, let $A_{xy}$ and $B_{xy}$ mean the corresponding sets of the graph $H^{z}$ for the edge $xy$. Clearly, $B_{xy}$ is independent. We will assume that
$H^z$ is connected, $A_{xy}$ and $B_{xy}$ are non-empty, and there is no an element of $A_{xy}$ having no neighbors in $B_{xy}$. Additionally, we will also assume that $B_{xy}$ has no degree one vertices and $A_{xy}$ has no two vertices $u$ and $v$ such that $N(u)\setminus (A_{xy}\cup \{x,y\})\subseteq N(v)\setminus (A_{xy}\cup \{x,y\})$. Computing $\gamma''(H^z_{xy})$ can be easily reduced to the case in polynomial time.

\begin{lemma} For the graph $H^z$, any element of $N(x)\setminus (\{y\}\cup N(y))$ is adjacent to any element of $N(y)\setminus \{x\}$.
\end{lemma}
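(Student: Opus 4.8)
The plan is to argue by contradiction. I would assume that some $u\in N(x)\setminus(\{y\}\cup N(y))$ and some $v\in N(y)\setminus\{x\}$ are non-adjacent in $H^z$ and then exhibit an induced $P_3+P_2$, which contradicts the fact that $H^z$, being an induced subgraph of the $P_3+P_2$-free graph $G$, is itself $P_3+P_2$-free. First I would record the easy membership facts: both $u$ and $v$ lie in $A_{xy}$ (they are neighbors of $x$ and of $y$, and differ from $x,y$), and $u\neq v$ because $u\not\sim y$ while $v\sim y$.

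The naive attempt, taking the path $u-x-y$ as the $P_3$ and searching for a disjoint non-adjacent edge, is hopeless: every vertex of $A_{xy}$ is adjacent to $x$ or $y$ and $B_{xy}$ is independent, so no edge of $H^z$ avoids the closed neighborhood of $\{x,y\}$. The idea I would use instead is to build the $P_3$ around $y$ together with a vertex of $B_{xy}$, and to exploit the last standing assumption on $A_{xy}$ to keep the two parts of the forbidden subgraph apart. Since $u,v\in A_{xy}$ are distinct, the assumption that no vertex of $A_{xy}$ has $N(\cdot)\setminus(A_{xy}\cup\{x,y\})$ contained in that of another (applied in both orders) lets me choose a vertex $a\in B_{xy}$ with $a\sim u$ and $a\not\sim v$, and a vertex $b\in B_{xy}$ with $b\sim v$ and $b\not\sim u$. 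Here I would stress that $N(w)\setminus(A_{xy}\cup\{x,y\})$ is exactly the set of neighbors of $w$ inside $B_{xy}$, so the incomparability of these sets for $u$ and $v$ is precisely what yields ``private'' $B_{xy}$-neighbors.

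It then remains to verify that $\{u,a\}$ together with $\{b,v,y\}$ induces a $P_3+P_2$. For the $P_3$ I would take $b-v-y$: indeed $b\sim v$ and $v\sim y$, while $b\not\sim y$ because $b\in B_{xy}$; for the $P_2$ I would take the edge $u-a$. All cross-edges vanish: $u\not\sim v$ by the contradiction hypothesis, $u\not\sim b$ and $a\not\sim v$ by the choice of the private neighbors, $u\not\sim y$ by the hypothesis on $u$, $a\not\sim y$ since $a\in B_{xy}$, and $a\not\sim b$ since $B_{xy}$ is independent (with $a\neq b$, as $a\not\sim v$ but $b\sim v$). The five vertices $u,a,v,b,y$ are pairwise distinct for the same reasons, so $H^z[\{u,a,v,b,y\}]$ is exactly $P_3+P_2$, the desired contradiction, and hence $u\sim v$.

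I expect the only genuine obstacle to be the selection of $a$ and $b$: one must recognize that the set appearing in the standing assumption coincides with the $B_{xy}$-neighborhood of the vertex, so that incomparability delivers neighbors private to $u$ and to $v$. Once that is seen, the cross-edge check is routine, and—pleasantly—the argument uses neither the edge $xy$, nor the connectivity of $H^z$, nor the absence of degree-one vertices in $B_{xy}$.
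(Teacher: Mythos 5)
Your proposal is correct and follows essentially the same route as the paper: the paper's proof also takes the hypothetical non-adjacent pair, invokes the standing irreducibility assumption (incomparability of the $B_{xy}$-neighborhoods) to extract private neighbors $a'\in N(a)\setminus N(b)$ and $b'\in N(b)\setminus N(a)$ in $B_{xy}$, and exhibits the induced $P_3+P_2$ on $a',a,b',b,y$ — exactly your configuration $u,a,v,b,y$. Your write-up merely makes explicit the cross-edge verification that the paper leaves implicit.
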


\begin{proof}
Assume that there are non-adjacent vertices $a\in N(x)\setminus N(y), a\neq y$ and $b \in N(y), b\neq x$. By the properties of $H^z$ above,
there are vertices $a',b'\in B_{xy}$ such that $a'\in N(a)\setminus N(b)$ and $b'\in N(b)\setminus N(a)$. Then $a',a,b',b,y$ induce a $P_3+P_2$. We have a contradiction.
\end{proof}

\section{Auxiliary results}

\subsection{Properties of irreducible graphs}

By some results of the previous section, the dominating set problem for a hereditary class ${\mathcal X}\subseteq Free(\{P_2+P_3\})$ can be polynomially
reduced to a similar-type problem for graphs in ${\mathcal X}$ whose vertex sets were partitioned into two subsets.
If $G$ is such a graph and $(A,B)$ is its partition, then we write $G\triangleq G(A,B)$. Moreover, $B$ is independent,
$B$ has no degree one vertices, $A$ has no two vertices $u$ and $v$ such that $N(u)\setminus A\subseteq N(v)\setminus A$, and
none element of $B$ is adjacent to all elements of $A$. A graph $G$
of this type is said to be \emph{irreducible}. Moreover, $A$ is split into three subsets $A_1,A_2,A_3$ such that adding vertices $x$ and
$y$ and all edges in $\{(x,x')|~x'\in A_1\cup A_3\}\cup \{(y,y')|~y'\in A_2\cup A_3\}$ to $G$ produces a graph $G'\in {\mathcal X}$.

Let $N_B(a)\triangleq\{b\in B|~(a,b)\in E(G)\}$ for a vertex $a\in A$, and let
$N_B(A')\triangleq\bigcup\limits_{a\in A'} N_B(a)$ for a subset $A'$ of $A$. Let $G^*$ be the graph obtained
from $G$ by adding the minimum possible number of edges to make $A$ to be a clique. Let $\gamma''(G)$ be the minimum cardinality of subsets
of $A$ dominating $B$. Clearly, $\gamma(G^*)=\gamma''(G)$, as there is a minimum dominating set of $G^*$ contained in $A$.

\begin{lemma}
Let $A'\triangleq\{a_1,a_2,\ldots,a_k\}$ be an independent subset of $A$ and $b_i\in N_B(a_i)\setminus \bigcup\limits_{j=1,j\neq i}^{k}N_B(a_j)$. Then
each element of $N'\triangleq N_B(A')\setminus \{b_1,b_2,\ldots,b_k\}$ is adjacent to all elements of $A'$.
\end{lemma}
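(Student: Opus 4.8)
The plan is to argue by contradiction, exploiting the fact that $G$ is $P_3+P_2$-free. Suppose some $b\in N'$ fails to be adjacent to every vertex of $A'$; say $(a_j,b)\notin E(G)$ for some index $j$. Since $b\in N_B(A')$, the vertex $b$ has at least one neighbor in $A'$, and because it is not adjacent to $a_j$, that neighbor is some $a_i$ with $i\neq j$; thus $(a_i,b)\in E(G)$.

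I would then assemble a forbidden induced subgraph on the five vertices $\{b,a_i,b_i,a_j,b_j\}$, where $b_i$ and $b_j$ are the private neighbors supplied by the hypothesis. The intended decomposition is the path $b-a_i-b_i$ together with the edge $a_j-b_j$. The edges $(a_i,b)$, $(a_i,b_i)$, and $(a_j,b_j)$ are present by construction, so the substance of the argument is to verify that these are the only edges among the five vertices.

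The non-adjacencies follow routinely once the private neighbors are in hand: $(b,b_i)$, $(b,b_j)$, and $(b_i,b_j)$ are absent because $B$ is independent; $(a_i,a_j)$ is absent because $A'$ is independent; $(a_j,b)$ is absent by our assumption; and $(a_i,b_j)$ together with $(a_j,b_i)$ are absent precisely because $b_i\in N_B(a_i)\setminus\bigcup_{l\neq i}N_B(a_l)$ and $b_j\in N_B(a_j)\setminus\bigcup_{l\neq j}N_B(a_l)$ are private to $a_i$ and $a_j$, respectively. I also need the five vertices to be pairwise distinct, which follows from $b\notin\{b_1,\dots,b_k\}$, from $i\neq j$, and from the privacy property forcing $b_i\neq b_j$. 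Consequently $G[\{b,a_i,b_i,a_j,b_j\}]$ is an induced $P_3+P_2$, contradicting $P_3+P_2$-freeness, and so $b$ must be adjacent to every element of $A'$.

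The only real subtlety, and the step I expect to require the most care, is the simultaneous bookkeeping of distinctness and the absence of cross edges; the decisive choice is to take the third vertex of the path to be the private neighbor $b_i$ of $a_i$ rather than an arbitrary common neighbor, since privacy is exactly what prevents an edge from $b_i$ to $a_j$ and thereby keeps the copy induced. This single observation is the crux of the whole argument.
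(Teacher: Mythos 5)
Your proof is correct and is essentially identical to the paper's: both derive a contradiction by exhibiting the induced $P_3+P_2$ on $\{b,a_i,b_i,a_j,b_j\}$ with path $b$--$a_i$--$b_i$ and edge $a_j b_j$, using the privacy of $b_i$ and $b_j$ to rule out cross edges. Your write-up merely spells out the distinctness and non-adjacency checks that the paper leaves implicit.
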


\begin{proof}
If there are an element $a_p\in A'$ having a neighbor $b\in B, b\neq b_p$ and an element $a_q\in A', (a_q,b)\not \in E(G)$, then
$b,b_p,a_p,a_q,b_q$ induce a $P_3+P_2$. Hence, every element of $N'$ must be adjacent to all elements of $A'$.
 \end{proof}

\begin{lemma} For each three vertices $a_1,a_2,a_3\in A$ such that $(a_1,a_2)\in E(G)$, $(a_1,a_3)\not \in E(G),(a_2,a_3)\not \in E(G)$,
we have $N_B(a_3)\subseteq N_B(a_1)\cup N_B(a_2)$.
If $D$ is a minimal subset of $A$ dominating $B$, then the graph $G[D]$ is complete multipartite.
\end{lemma}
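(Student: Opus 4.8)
The plan is to treat the two assertions in sequence, since the second follows from the first. For the first inclusion, I would argue by contradiction: suppose there are $a_1,a_2,a_3\in A$ with $(a_1,a_2)\in E(G)$, $a_3$ non-adjacent to both $a_1$ and $a_2$, and a vertex $b\in N_B(a_3)$ with $b\notin N_B(a_1)\cup N_B(a_2)$. The edge $(a_3,b)$ is then a natural candidate for the $P_2$ part of a forbidden $P_3+P_2$, so the whole task reduces to exhibiting an induced $P_3$ that is disjoint from and anticomplete to $\{a_3,b\}$, using the edge $(a_1,a_2)$ as a backbone. Here the crucial tool is the irreducibility condition that no two vertices $u,v\in A$ satisfy $N_B(u)\subseteq N_B(v)$: applied to $a_1,a_3$ it yields a vertex $b_1\in N_B(a_1)\setminus N_B(a_3)$, and symmetrically a vertex $b_2\in N_B(a_2)\setminus N_B(a_3)$.

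The main branch is clean. If $b_1$ is non-adjacent to $a_2$, then $b_1,a_1,a_2$ induce a $P_3$; it is anticomplete to $\{a_3,b\}$ because $a_1,a_2$ are non-adjacent to $a_3$ and to $b$ by hypothesis, $b_1$ is non-adjacent to $a_3$ by choice, and $b_1$ is non-adjacent to $b$ since $B$ is independent. This gives an induced $P_3+P_2$, a contradiction; the symmetric argument disposes of the case where $b_2$ is non-adjacent to $a_1$. The remaining, harder situation is when every such private neighbour of $a_1$ (respectively $a_2$) avoiding $N_B(a_3)$ is in fact a common neighbour of $a_1$ and $a_2$, so the obvious $P_3$ cannot be completed inside $B\cup\{a_1,a_2\}$. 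I expect this to be the principal obstacle. To resolve it I would bring in the added vertices $x,y$ of the host graph $G'$, exploiting that $x$ is adjacent exactly to $A_1\cup A_3$, $y$ exactly to $A_2\cup A_3$, and both are anticomplete to $B$; a short case analysis on the memberships of $a_1,a_2,a_3$ among $A_1,A_2,A_3$ should always supply a vertex adjacent to exactly one of $a_1,a_2$ and non-adjacent to both $a_3$ and $b$, completing the $P_3$ and contradicting $P_3+P_2$-freeness of $G'$.

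For the second assertion I would use the fact that a graph is complete multipartite if and only if it contains no induced $P_2+K_1$, i.e.\ non-adjacency is transitive on its vertices. Suppose $G[D]$ were not complete multipartite for a minimal $D\subseteq A$ dominating $B$; then there are $a_1,a_2,a_3\in D$ with $(a_1,a_2)\in E(G)$ and $a_3$ non-adjacent to both, exactly the configuration of the first part, whence $N_B(a_3)\subseteq N_B(a_1)\cup N_B(a_2)$. On the other hand, minimality of $D$ forces $a_3$ to possess a private neighbour $b'\in B$ dominated within $D$ only by $a_3$; since $a_1,a_2\in D$, this $b'$ lies in $N_B(a_3)$ but in neither $N_B(a_1)$ nor $N_B(a_2)$, directly contradicting the inclusion just established. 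Hence no such triple exists and $G[D]$ is complete multipartite. This second step is routine once the first inclusion is in hand, so the entire difficulty is concentrated in closing the common-neighbour case of the first part.
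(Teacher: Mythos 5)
Your first branch and your treatment of the second assertion are both correct and coincide with the paper's: the configuration $b_1,a_1,a_2$ against $a_3,b$ handles any private neighbour of $a_1$ that misses $a_2$, and the private-neighbour argument for a minimal dominating set is exactly how the paper derives complete multipartiteness. The gap is in your resolution of the ``harder case''. Your plan to complete the $P_3$ with $x$ or $y$ cannot work in general: you need a vertex adjacent to exactly one of $a_1,a_2$ and to neither $a_3$ nor $b$, but if $a_1,a_2,a_3$ all lie in the same set $A_i$ (and by Lemma 3 this is essentially forced, since $A_1$ is complete to $A_2\cup A_3$, so a non-neighbour $a_3$ of an edge $a_1a_2$ tends to live in the same part as its endpoints), then $x$ is adjacent to both of $a_1,a_2$ or to neither, and likewise $y$; no case analysis on memberships can produce the required separator. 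So the case you correctly identify as the principal obstacle is left genuinely open.

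The paper closes it by building the $P_3$ around the edge $a_3b$ instead of around $a_1a_2$. After your first branch you know $N_B(a_1)\bigtriangleup N_B(a_2)\subseteq N_B(a_3)$. Irreducibility gives some $d\in N_B(a_2)\setminus N_B(a_1)$, and $d$ is adjacent to $a_3$ by the previous sentence; hence $d,a_3,b$ is an induced $P_3$. Now take any $c\in N_B(a_1)\cap N_B(a_2)$ not adjacent to $a_3$ (your $b_1$ is such a vertex in the harder case, since it lands in $N_B(a_1)\cap N_B(a_2)\setminus N_B(a_3)$): the pair $c,a_1$ is anticomplete to $\{d,a_3,b\}$ because $B$ is independent, $c\not\sim a_3$, $a_1\not\sim d$, $a_1\not\sim a_3$ and $a_1\not\sim b$, giving an induced $P_3+P_2$ inside $G$ itself, with no appeal to $x$ or $y$. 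Consequently every vertex of $N_B(a_1)\cup N_B(a_2)$ is adjacent to $a_3$, which contradicts the irreducibility condition $N_B(a_1)\not\subseteq N_B(a_3)$. This role reversal --- using $a_3b$ as the backbone of the $P_3$ and $c a_1$ as the $P_2$ --- is the missing idea.
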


\begin{proof} Assume that there is a vertex $b\in N_B(a_3)\setminus (N_B(a_1)\cup N_B(a_2))$. To avoid an induced $P_3+P_2$ in $G$,
each element of $N_B(a_1)\otimes N_B(a_2)$ is adjacent to $a_3$. Every element of $N_B(a_1)\cap N_B(a_2)$ is adjacent to $a_3$,
otherwise an element of the set, $a_1$, any element of $N_B(a_2)\setminus N_B(a_1), a_3$, and $b$ induce a $P_3+P_2$ in $G$.
We obtain that $N_B(a_1)\cup N_B(a_2)\subseteq N_B(a_3)$ which is impossible.

If $G[D]$ is not complete multipartite, then there are elements $a_1,a_2,a_3$ of $D$ such
that $(a_1,a_2)\in E(G),(a_1,a_3)\not \in E(G),(a_2,a_3)\not \in E(G)$. As $D$ is minimal,
then there is a vertex in $N_B(a_3)\setminus (N_B(a_1)\cup N_B(a_2))$ which is impossible.
\end{proof}\\

\begin{lemma}
Let ${\mathcal K}$ be the set of connected components of $G[A_1]$.
Then $A_1$ is independent or $\gamma''(G)=\min\limits_{K\in {\mathcal K}, N_B(V(K)\cup A_2\cup A_3)=B} \{\gamma''(G_K)|~G_K\triangleq G[V(K)\cup A_2\cup A_3\cup B]\}$.
\end{lemma}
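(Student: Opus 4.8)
My plan is to prove the displayed identity by splitting it into two inequalities, the trivial one being $\gamma''(G)\le \min_K \gamma''(G_K)$: for every component $K$ admissible in the minimum (i.e. with $N_B(V(K)\cup A_2\cup A_3)=B$), any subset of $V(K)\cup A_2\cup A_3\subseteq A$ that dominates $B$ is in particular a subset of $A$ dominating $B$. So the whole task reduces to producing, under the hypothesis that $A_1$ is not independent, a \emph{minimum} subset $D\subseteq A$ dominating $B$ whose trace $D_1\triangleq D\cap A_1$ lies inside a single connected component of $G[A_1]$; for such a $D$ that component is admissible and witnesses $\min_K\gamma''(G_K)\le |D|=\gamma''(G)$. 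I would start from an arbitrary minimum $D$. Being minimum it is minimal, so by Lemma 5 the graph $G[D]$ is complete multipartite, and hence so is the induced subgraph $G[D_1]$. A complete multipartite graph is either connected or edgeless, and this dichotomy drives the argument.

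If $D_1$ is contained in one component of $G[A_1]$ (in particular whenever $G[D_1]$ is connected, or when $D_1=\emptyset$, in which case I take $K$ to be any non-trivial component), there is nothing more to do. The substantial case is when $D_1$ meets at least two components. Then $G[D_1]$ is disconnected, so by the dichotomy above it is edgeless, i.e. $D_1$ is independent with $|D_1|\ge 2$. Since $A_1$ is not independent, I may fix a non-trivial component $K^*$ of $G[A_1]$ together with an edge $pq$ inside it. The engine is the containment part of Lemma 5: for any $a\in D_1\setminus V(K^*)$ the triple $p,q,a$ satisfies $(p,q)\in E(G)$ and $(p,a),(q,a)\notin E(G)$ (vertices of distinct components are non-adjacent), whence $N_B(a)\subseteq N_B(p)\cup N_B(q)$. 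Thus every $B$-neighbourhood contributed by a vertex of $D_1$ outside $K^*$ can be recovered from inside $K^*$.

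The delicate point, and the step I expect to require the most care, is purely one of counting: the replacement must not increase $|D|$. Writing $s\triangleq|D_1\setminus V(K^*)|$, if $s\ge 2$ I would replace all of $D_1$ by $(D_1\cap V(K^*))\cup\{p,q\}$; by the containment this still dominates $B$, it has at most $|D_1|-s+2\le|D_1|$ elements, and its $A_1$-part lies in $V(K^*)$. If $s=1$, with $a$ the unique outside vertex, then $D_1\cap V(K^*)\neq\emptyset$; choosing $u\in D_1\cap V(K^*)$ and a neighbour $q'$ of $u$ inside the connected non-trivial graph $K^*$, the triple $u,q',a$ gives $N_B(a)\subseteq N_B(u)\cup N_B(q')$, so replacing $a$ by $q'$ does not increase the size, preserves domination (because $u$ is retained), and again confines $D\cap A_1$ to $V(K^*)$. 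In each case the resulting set is a minimum subset of $A$ dominating $B$ with its $A_1$-part inside $K^*$, so $K^*$ is admissible and $\gamma''(G_{K^*})\le\gamma''(G)$; combined with the easy inequality this yields equality. The genuine obstacle is precisely to guarantee that the swaps never cost more than they save, which is why the cases $s\ge 2$ and $s=1$ have to be separated.
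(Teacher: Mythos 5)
Your argument is correct and follows essentially the same route as the paper's: both take a minimum dominating subset $D\subseteq A$ and use the neighbourhood-containment part of Lemma 5 to push $D\cap A_1$ into a single non-trivial component of $G[A_1]$, with the same case split on whether at least two or exactly one vertex of $D\cap A_1$ lies outside that component (replacing by an edge $\{a',a''\}$ of the component in the first case, and swapping the lone outside vertex for a neighbour of a retained vertex in the second). Your write-up is in fact somewhat more explicit than the paper's about the trivial inequality, the cases $D\cap A_1=\emptyset$ or contained in one component, and why $D\cap V(K^*)\neq\emptyset$ when $s=1$.
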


\begin{proof} Let $D$ be a minimum subset of $A$ dominating $B$. We may assume that $D\cap A_1$ has at most
two elements and $A_1$ is not independent. By Lemma 5, $N_B(A_1)=N_B(K)$ for each connected component $K\in {\mathcal K}$ with at least two vertices.
If $(D\cap A_1)\setminus K$ has at least two elements, then $D\cap (A_2\cup A_3)\cup D\cap K\cup \{a',a''\}$ also dominates $B$ by Lemma 5,
where $a'$ and $a''$ are arbitrary adjacent vertices of $K$. If $(D\cap A_1)\setminus K$ has only one element $a^*$,
then  $D\cap K$ has an element $a^{**}$. The set $D\setminus \{a^*\}\cup \{a\}$ dominates $B$, where $a \in V(K)$ is
an arbitrary vertex adjacent to $a^{**}$. Hence, $\gamma''(G)$ must be equal to those minimum.
\end{proof}\\

According to the Lemma 6, we will assume that $G[A_1]$ and $G[A_2]$ are connected or one of them is connected and
the second is an empty graph or they are empty graphs.
By $\gamma''_k(G)$ we denote the size of a minimum subset of $A$ dominating $B$ and inducing a complete multipartite
subgraph with at most $k$ parts if one exists. If there is no such a subset, then $\gamma''_k(G)=+\infty$.

\begin{lemma}
For each fixed $k$, $\gamma''_k(G)$ can be computed in $O(|A|^k|V(G)|^{O(1)})$ time.
\end{lemma}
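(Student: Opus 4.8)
The goal is to compute, for each fixed $k$, the minimum size $\gamma''_k(G)$ of a subset $D\subseteq A$ that dominates $B$ and induces a complete multipartite graph with at most $k$ parts. The plan is to exploit the rigid combinatorial structure forced by the complete-multipartite condition together with the $P_3+P_2$-freeness established in the preceding lemmas, so that a candidate solution is essentially determined by a bounded amount of data that can be enumerated in polynomial time.

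First I would fix a prospective complete multipartite structure on $D$. A complete multipartite graph with at most $k$ parts is characterized by its partition $\{P_1,\dots,P_k\}$ into independent sets with all cross-edges present; vertices in the same part are pairwise non-adjacent and vertices in different parts are pairwise adjacent. The key observation is that Lemma~5 controls how vertices of an independent subset of $A$ interact with $B$: if $a_1,\dots,a_k$ are pairwise non-adjacent, then outside the ``private'' neighbors $b_1,\dots,b_k$, every common neighbor touches \emph{all} of them. This means that within a single part $P_j$, the domination behavior on $B$ is largely governed by the ``private'' vertices together with a common core. My plan is therefore to guess, for each of the $k$ parts, one representative vertex $a_j\in A$ (a total of at most $|A|^k$ choices), and let these representatives pin down both the cross-adjacency pattern and, via Lemma~5, the structure of $N_B$ on each part.

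The main step is to argue that once the $k$ representatives are chosen, the rest of the solution can be filled in greedily or by a direct polynomial-time subroutine. Having fixed one vertex $a_j$ per part, a candidate vertex $a\in A$ can join part $P_j$ only if it is non-adjacent to $a_j$ and adjacent to all the other representatives $a_i$ ($i\neq j$); this restricts the pool of admissible vertices for each part to an explicitly computable set. Among admissible vertices, I would select a minimum-size subcollection whose union of $B$-neighborhoods covers $B$. Naively this is a set-cover instance, but here the $P_3+P_2$-freeness collapses the combinatorics: by Lemma~5 the $B$-neighborhoods within a part differ only on private vertices over a shared common core, so the covering requirement reduces to covering a bounded residual set, which can be resolved in $|V(G)|^{O(1)}$ time. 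Ranging over all $|A|^k$ choices of representatives and taking the best feasible dominating set gives the claimed $O(|A|^k|V(G)|^{O(1)})$ bound, with $\gamma''_k(G)=+\infty$ recorded if no choice yields a set dominating $B$.

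The hard part will be making the reduction from the ``fill-in'' step to a genuinely polynomial subroutine fully rigorous, rather than an exponential set-cover. Everything hinges on showing that, after the $k$ representatives are fixed, the structural lemmas force the admissible vertices to have $B$-neighborhoods that are nested or near-laminar enough that a minimum cover is achievable by a simple local rule (for instance, always taking private vertices when a residual element of $B$ would otherwise be uncovered). I would verify this by a careful case analysis of how two admissible vertices in the same part can differ, repeatedly invoking Lemma~5 and the irreducibility hypothesis that $A$ contains no vertex whose $B$-neighborhood is contained in another's. If that near-laminarity fails in some configuration, the fallback is to show that such a configuration cannot arise in an irreducible $P_3+P_2$-free graph, thereby closing the gap.
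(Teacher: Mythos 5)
Your opening move --- guessing one representative per part, for $O(|A|^k)$ tuples in total --- is exactly the paper's, and you correctly identify the structural fact that drives the argument: within an independent subset of $A$, every vertex has at most one private neighbour in $B$ and all its other $B$-neighbours are common to the whole subset (note this is Lemma~4, not Lemma~5 as you cite it; Lemma~5 is the complete-multipartite statement). But the proposal stops short precisely where the proof has to happen. You reduce the fill-in step to a residual covering problem, concede that ``naively this is a set-cover instance,'' and defer its polynomial solvability to an unperformed case analysis that would establish the neighbourhoods are ``nested or near-laminar.'' That deferred step is the entire content of the lemma, so as written there is a genuine gap.

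The paper closes it differently, and more bluntly: no covering problem survives at all. By Lemma~5 a minimal dominating subset $D\subseteq A$ induces a complete multipartite graph, and by Lemma~4 every vertex of $B$ with a neighbour in a part of $G[D]$ is adjacent either to all vertices of that part or to exactly one of them. Hence, after a representative tuple $A'$ (one vertex per part) and $N_B(A')$ are deleted, each surviving vertex of $B^*$ has a single forced dominator, each non-representative vertex of $D$ dominates exactly one vertex of $B^*$ (its private neighbour), and these private neighbours are pairwise distinct. The value attached to an admissible tuple is therefore simply $|A'|+|B^*|$ --- a count, not an optimization --- and the algorithm is nothing more than the enumeration of the $O(|A|^k)$ tuples together with this admissibility check and count. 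To repair your argument, replace the set-cover/laminarity discussion with this degeneracy observation; the ``near-laminar'' case analysis you sketch is a detour that the structural lemmas already make unnecessary.
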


\begin{proof} Let $D$ be a minimal subset of $A$ dominating $B$. By Lemma 5, $G[D]$ is complete multipartite. By Lemma 4, any element of $B$ having a neighbor in a part of $G[D]$ must be adjacent to all elements of this part or to one its elements. If $G[D]$ has at most $k$ parts, then a subset $A'$ containing
exactly one element of each part and $N_B(A')$ can be removed from $G$ such that any element of $B^*$ in the resultant graph $G_{A'}(A^*,B^*)$ has only one neighbor in $A^*$. A subset $A'$ of
this type is said to be \emph{admissible}. If there are no admissible sets, then $\gamma''_k(G)=+\infty$. Otherwise, $\gamma''_k(G)$
is equal to the minimal of the sums $|A'|+|B^*|$ over all admissible subsets $A'$. This optimal sum can be computed in $O(|A|^k|V(G)|^{O(1)})$ time.
\end{proof}

\subsection{The classes $Free(\{P_3+P_2,orb\}), Free(\{P_3+P_2,K_5\}), Free(\{P_3+P_2,gem\})$, and $Free(\{P_3+P_2,sinker\})$}

\begin{lemma}
If $G(A,B)$ is an irreducible $\{P_3+P_2,orb\}$-free or $\{P_3+P_2,K_5\}$-free graph, then $\gamma''(G)=\gamma''_4(G)$.
\end{lemma}

\begin{proof} Let $D$ be a minimum subset of $A$ dominating $B$. For each $a\in D$, there is a vertex $b_a\in N_B(a)\setminus
\bigcup\limits_{v\in D\setminus \{a\}} N_B(v)$. Hence, $G[D]$ is complete multipartite with at most four parts by Lemma 5.
Therefore, $\gamma''(G)=\gamma''_4(G)$.
\end{proof}

\begin{lemma}
The dominating set problem for $\{P_3+P_2,gem\}$-free graphs can be polynomially reduced to the same problem for $\{P_5,gem\}$-free graphs
\end{lemma}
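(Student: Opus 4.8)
The plan is to push the whole instance through the reduction machinery of Sections~4 and the irreducibility set-up of Section~5.1 until the problem becomes the computation of $\gamma''$ on an irreducible graph, and then to observe that this last quantity is exactly the domination number of a split graph that turns out to be $\{P_5,gem\}$-free. First I would dispose of the trivial range: deciding whether $\gamma(G)\leq 3$ for a given $\{P_3+P_2,gem\}$-free graph $G$ costs $O(|V(G)|^3)$ time by inspecting all subsets of size at most three, so I may assume $\gamma(G)\geq 4$. Under this assumption the identity $\gamma(G)=\min(i(G),2+\min_{xy\in E(G)}\gamma'(G_{xy}))$, together with the polynomial-time computability of $i(G)$, reduces the task to computing $\gamma'(G_{xy})$ for each edge. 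Lemma~1 replaces this by $\min(\gamma''(G_{xy}),k_{xy})$, Lemma~2 by $\min_{z}\gamma''(H^{z}_{xy})+1$, and the preprocessing preceding Lemma~3 makes each remaining graph $H^{z}$ an irreducible $\{P_3+P_2,gem\}$-free graph $G(A,B)$. Since there are only polynomially many subproblems and each step is polynomial, it suffices to compute $\gamma''(G)$ for an irreducible $G(A,B)$ by calling an oracle for the dominating set problem on $\{P_5,gem\}$-free graphs (binary search turns the decision oracle into the value at a logarithmic cost).

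The graph I would feed to the oracle is $G^{*}$, the split graph obtained from $G(A,B)$ by turning $A$ into a clique; recall $\gamma(G^{*})=\gamma''(G)$. Because $G^{*}$ is split, with clique $A$ and stable set $B$, and because split graphs contain no induced $2K_2$ whereas $P_5$ does, the graph $G^{*}$ is automatically $P_5$-free. The substance of the argument is therefore to prove that $G^{*}$ is also gem-free, which I would do by contradiction. In a split graph with clique $A$ and stable set $B$ a gem can be placed in only one way: its apex cannot lie in $B$ (two vertices of $B$ are non-adjacent, while the apex is universal), and neither can it have three clique-vertices among which a non-edge of the gem falls; a short check leaves only the placement in which the apex $p$ and the two inner path-vertices $r,s$ lie in $A$ and the two path-endpoints $q,t$ lie in $B$. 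Translating back to $G$, a gem in $G^{*}$ would give $q,t\in B$ and $p,r,s\in A$ with $q\sim p$, $q\sim r$, $t\sim p$, $t\sim s$, $q\not\sim s$ and $t\not\sim r$, with no restriction whatsoever on the $G$-adjacencies inside $\{p,r,s\}$, since those edges are supplied by the clique completion.

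The final and main step is to show such a configuration cannot occur in an irreducible $\{P_3+P_2,gem\}$-free graph. The irreducibility hypothesis that no two vertices of $A$ have nested $B$-neighbourhoods supplies private vertices $v\in N_B(r)\setminus N_B(p)$ and $u\in N_B(s)\setminus N_B(p)$, the hypotheses that $B$ is independent and has no vertex of degree one give further neighbours to work with, and the partition $A=A_1\cup A_2\cup A_3$ together with the cross-completeness of its parts (the analogue of Lemma~3) is available as well. The goal is to assemble from $p,q,t,r,s$ and these auxiliary vertices an induced $P_3+P_2$ (or an induced gem) in $G$, contradicting its membership in the class. The hard part is exactly this extraction: because the apex $p$ is adjacent to \emph{both} path-endpoints $q$ and $t$, the obvious non-adjacent pair $q,t$ cannot serve as the disjoint $P_2$, and the naive attempts keep producing only induced copies of $P_5$, which are permitted. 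What forces the contradiction is the combination of the degree and non-containment conditions, which guarantee enough private $B$-neighbours around $r$ and $s$ to close such a $P_5$ into a $P_3+P_2$ placed away from $p$; carrying this out requires a case split according to whether $p\sim r$, whether $p\sim s$, whether $r\sim s$, and whether each of $r,s$ has a second private neighbour outside $N_B(p)$, with the partition structure invoked in the residual cases.

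Once $G^{*}$ is shown to be $\{P_5,gem\}$-free, the oracle returns $\gamma(G^{*})=\gamma''(G)$, and composing this with the polynomially many reduction steps above yields a polynomial reduction of the dominating set problem for $\{P_3+P_2,gem\}$-free graphs to the dominating set problem for $\{P_5,gem\}$-free graphs, which is the assertion. I expect the gem-freeness of $G^{*}$, i.e.\ the case analysis of the third paragraph, to be the only real obstacle; everything else is an assembly of results already established in the excerpt.
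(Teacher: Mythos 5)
Your overall framework (reduce via $i(G)$, Lemmas~1--2, and the irreducibility preprocessing to computing $\gamma''$ on an irreducible $G(A,B)$) matches the paper, but the final and decisive step fails: the clique completion $G^{*}$ of an irreducible $\{P_3+P_2,gem\}$-free graph is \emph{not} in general gem-free, so the case analysis you defer to the third paragraph cannot be carried out. A concrete counterexample is the induced $6$-cycle $p\,q\,r\,v\,s\,t\,p$ with $A=\{p,r,s\}$ (independent in $G$) and $B=\{q,v,t\}$: here $B$ is independent with all degrees two, the sets $N_B(p)=\{q,t\}$, $N_B(r)=\{q,v\}$, $N_B(s)=\{t,v\}$ are pairwise incomparable, no vertex of $B$ sees all of $A$, and attaching $x$ adjacent to $A$ (and $y$ adjacent only to $x$) gives a connected $\{P_3+P_2,gem\}$-free graph $G'$, so this $G(A,B)$ is a legitimate irreducible instance. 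Yet $G^{*}$ is the $3$-sun, and taking $p$ as apex over the induced path $q\,r\,s\,t$ exhibits a gem. This realizes exactly the configuration you identified ($p,r,s\in A$, $q,t\in B$, middle edge $rs$ supplied by the completion), which shows that configuration is not contradictory and no amount of case splitting will exclude it. The clique-completion trick is sound for $K_{1,4}$ and $fork$ (Lemmas~11--12 of the paper) precisely because there the added edges cannot participate in the forbidden pattern, but for the gem the added edge $rs$ is the crux of the pattern.

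The paper's proof avoids $G^{*}$ entirely. It first splits off the bipartite case ($\gamma''=\gamma''_2$ via Lemma~7 --- note this already absorbs the counterexample above, where $G[A]$ is edgeless), then uses that $G[A_1\cup A_3]$ and $G[A_2]$ are $P_4$-free (cographs), since otherwise $x$ or $y$ would be the apex of a gem in $G'$. The cograph decomposition (maximum independent set complete to the rest) together with Lemma~4 yields an \emph{additive} decomposition of $\gamma''$ over the modules, and the recursion bottoms out only when the relevant $A$-parts are already cliques \emph{in $G$ itself}; the corresponding induced subgraphs of $G$ are then $\{P_5,gem\}$-free with no completion needed. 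If you want to repair your argument you would have to import this decomposition; the oracle call on $G^{*}$ cannot be the last step.
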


\begin{proof} Let $G(A,B)$ be an irreducible $\{P_3+P_2,gem\}$-free graph. If $G[A_1\cup A_2]$ is bipartite, then $\gamma''(G)=\gamma''_2(G)$. Hence, $\gamma''(G)$ can be computed in polynomial time. By Lemma 6, we will consider that $G[A_1]$ is connected and $G[A_2]$ is connected or an empty graph.
Hence, by Lemma 3, $G[A_1\cup A_3]$ is connected.

Let $b$ be a vertex of $B$ adjacent to a vertex in $A_1\cup A_3$ and to
a vertex in $A_2$. Then $A_3=\emptyset$ by Lemma 3 and the fact of $gem$-freeness of $G'$.
The vertex $b$ must be adjacent to all elements of $A_1$. Indeed, if there are vertices
$a^1_1,a^2_1\in A_1$ and $a_2\in A_2$ such that $(b,a^1_1)\in E(G),(b,a_2)\in E(G),(b,a^2_1)\not \in E(G)$, then
$b,a^1_1,a^2_1,a_2,x$ induce a $gem$ in $G'$. The set $A_2$ is independent, otherwise $G[A_2]$ is connected and $b$ must
be adjacent to all vertices of $A$. Moreover, $G[A_1]$ has at least two vertices $a'$ and $a''$, otherwise $G[A_1\cup A_2]$ is
bipartite. There is an element of $N_B(a')\setminus N_B(a'')$, and it can be adjacent to none element of $A_2$. Hence,
any subset of $A$ dominating $B$ must contain an element of $A_1$. Hence, $\gamma''(G)=\gamma''(G[A_1\cup A_3\cup N_B(A_1\cup A_3)])+\gamma''(G[A_2\cup (N_B(A_2)\setminus N_B(A_1\cup A_3))])$. This equality
also holds if there is no an element of $B$ adjacent to an element of $A_1\cup A_3$ and an element of $A_2$.

It is well-known that any $P_4$-free graph $H$ with at least two vertices
can be represented as follows. There are induced subgraphs $H_1,\ldots,H_k$ of $H$ such that $H=H_1+\ldots+H_k$ or
$H=H_1\times\ldots\times H_k$. Hence, if $H$ is a connected graph different from a complete graph and $IS_H$ is its maximum independent set, then $|IS_H|>1$ and any
element of $IS_H$ is adjacent to any element of $V(H)\setminus IS_H$. A maximum independent set of a $P_4$-free
graph can be computed in polynomial time.

Both graphs $G[A_1\cup A_3]$ and $G[A_2]$ are $P_4$-free, otherwise $G'$ is not $gem$-free. Assume that $G[A_1\cup A_3]$ is not complete. Then its vertex set can be decomposed into a maximum independent set $IS$ and the remaining part $V'$. It is easy to see that any vertex in $B$ adjacent to
a vertex in $IS$ and a vertex in $V'$ must be adjacent to all vertices of $IS$. Moreover, as $|IS|>1$, then there is a
vertex in $N_B(IS)\setminus N_B(V')$. Hence, $\gamma''(G[A_1\cup A_3\cup N_B(A_1\cup A_3)])=\gamma''(G[IS\cup N_B(IS)])+\gamma''(G[V'\cup (N_B(V')\setminus N_B(IS))])$. Clearly, $\gamma''(G[IS\cup N_B(IS)])=\gamma''_1(G[IS\cup N_B(IS)])$ and it can be computed in polynomial time by Lemma 7. Thus, computing
$\gamma''(G[A_1\cup A_3\cup N_B(A_1\cup A_3)])$ and $\gamma''(G[A_2\cup (N_B(A_2)\setminus N_B(A_1\cup A_3))])$ can be polynomially reduced to the case, when $G[A_1\cup A_3]$ and $G[A_2]$ are complete. In this case, $G[A_1\cup A_3\cup N_B(A_1\cup A_3)]$ and $G[A_2\cup (N_B(A_2)\setminus N_B(A_1\cup A_3))]$ are $\{P_5,gem\}$-free.
\end{proof}

\begin{lemma}
The dominating set problem for $\{P_3+P_2,sinker\}$-free graphs can be solved in polynomial time.
\end{lemma}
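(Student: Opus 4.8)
The plan is to follow the scheme of Lemmas 8 and 9: reduce the dominating set problem for $\{P_3+P_2,sinker\}$-free graphs to computing $\gamma''(G)$ for an irreducible $\{P_3+P_2,sinker\}$-free graph $G(A,B)$. By Lemma 3 the sets $A_1,A_2,A_3$ are pairwise completely joined, so the only non-edges of $G[A]$ lie inside the $A_i$, and by Lemma 6 I may assume $G[A_1]$ and $G[A_2]$ are each connected or edgeless. Throughout I work in the graph $G'$ obtained by adding $x$ (adjacent to $A_1\cup A_3$) and $y$ (adjacent to $A_2\cup A_3$) with $x\sim y$, and I exploit that $G'$ is $sinker$-free, each structural consequence coming from exhibiting a $K_4$ together with a fifth vertex joined to exactly two of its vertices.

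First I would extract the structure of $A$. (i) For an edge $dd'$ of $G[A_3]$ the set $\{x,y,d,d'\}$ is a $K_4$, so any $d''\in A_3$ nonadjacent to both $d$ and $d'$ gives a $sinker$; hence $G[A_3]$ is $(K_2+K_1)$-free, i.e.\ complete multipartite. (ii) If $A_1$ had an edge $a_1a_1'$ and $A_2$ an edge $cc'$, then $\{a_1,a_1',c,c'\}$ is a $K_4$ and $x$ is joined to exactly $a_1,a_1'$, a $sinker$; so at most one of $A_1,A_2$ has an edge and I may assume $A_2$ is edgeless. (iii) If $A_3\neq\emptyset$, pick $d\in A_3$; an edge $aa'$ of $A_1$ makes $\{x,d,a,a'\}$ a $K_4$ with $y$ joined to exactly $x,d$, a $sinker$, so $A_1$ and symmetrically $A_2$ are edgeless. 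Consequently $G[A]$ is complete multipartite whenever $A_3\neq\emptyset$, and also when $A_3=\emptyset$ and $A_1$ is edgeless. The residual case is $A_3=\emptyset$ with $G[A_1]$ connected and containing an edge; there the $K_4$ given by $\{x,a,a',a''\}$ on a triangle of $A_1$ shows that in $G[A_1]$ no vertex is adjacent to exactly one vertex of any triangle.

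The algorithm then splits on these cases. When $G[A]$ is complete multipartite I would take a minimal $D\subseteq A$ dominating $B$; by Lemma 5 it induces a complete multipartite graph with $m$ parts. If $m\leq 3$ then $\gamma''(G)=\gamma''_3(G)$, computable in polynomial time by Lemma 7. If $m\geq 4$, minimality supplies private neighbours, so Lemma 4 applied part-by-part shows that every $b\in B$ with at least two neighbours in $D$ is adjacent to entire parts of $D$; choosing representatives of four parts yields a $K_4$, and $sinker$-freeness then forces the number of parts met by such a $b$ to lie in $\{1,m-1,m\}$. This rigidity confines the optimum to polynomially many configurations (the at most one ``missed'' part for the all-but-one vertices, together with a matching/greedy assignment of the private and full-part neighbours), giving $\gamma''(G)$ in polynomial time. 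In the residual case I would peel off independent sets of $A_1$ via $\gamma''_1$, exactly as in Lemma 9, reducing to pieces in which the $A$-part is a clique; a clique together with the independent set $B$ is $P_5$-free, so these pieces are $\{P_5,sinker\}$-free and the problem on them is polynomial by Theorem 2. Theorem 2 applies because $\mathcal Q\not\subseteq Free(\{sinker\})$: for $H\in\mathcal S$ with at least four vertices and an edge, an edge-vertex of $Q(H)$ is joined to exactly the two endpoints of its edge inside the clique $V(H)$, which already induces a $sinker$.

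I expect the main obstacle to be the residual case $A_3=\emptyset$ with $G[A_1]$ containing edges. Unlike $G[A_3]$, the triangle condition does not force $G[A_1]$ to be complete multipartite — an induced $P_4$, for instance, survives it — so the peeling into clique pieces must be carried out with care and shown to preserve $\{P_5,sinker\}$-freeness. A secondary difficulty is the unbounded number of parts in the complete multipartite case, where Lemma 7 no longer applies directly and the optimisation must instead be driven entirely by the $\{1,m-1,m\}$ rigidity established above.
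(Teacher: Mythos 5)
Your structural analysis of $G[A]$ (claims (i)--(iii)) is correct, and in places sharper than what the paper records, but the proposal leaves both of its terminal cases open, and in both the missing step is the same idea. In the complete multipartite case with $m\geq 4$ parts, the $\{1,m-1,m\}$ rigidity you derive concerns the parts of the \emph{unknown} set $D$ and does not by itself yield an algorithm; ``polynomially many configurations'' plus ``a matching/greedy assignment'' is a hope rather than a proof, since you still must decide, for each $b$ seen by only one part, which part serves it and with how many vertices. The paper proves the much stronger statement that once $H\triangleq G[A_1\cup A_3]$ contains a $K_4$, \emph{no} vertex of $B$ is adjacent to two distinct parts of $H$ (a clique of part-representatives together with $x$ forces such a $b$ to be adjacent to all of $A_1\cup A_3$, which irreducibility excludes), and likewise no vertex of $B$ straddles $A_1\cup A_3$ and $A_2$; consequently $\gamma''(G)$ splits as a sum of $\gamma''_1$-values over the parts and over $A_2$, and Lemma 7 finishes.

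The residual case ($A_3=\emptyset$, $G[A_1]$ connected and not independent) is a genuine gap, exactly as you suspect: the peeling of Lemma 9 needs $P_4$-freeness of the $A$-side, which $gem$-freeness supplies and $sinker$-freeness does not. The paper's resolution --- the one idea your proposal is missing --- is a dichotomy on $K_5$: if $G[A]$ is $K_5$-free, then by Lemma 5 every minimal dominating $D\subseteq A$ induces a complete multipartite graph with at most four parts, so $\gamma''(G)=\gamma''_4(G)$ and Lemma 7 applies; otherwise $H$ contains a $K_4$, and $sinker$-freeness of $G'$ (every vertex of $H$ outside a maximum clique $Q$ with $|Q|\geq 4$ must have exactly $|Q|-1$ neighbours in $Q$, because $x$ completes any deficient attachment to a $sinker$, and nonadjacency to $Q$ is likewise excluded) forces $H$ itself to be complete multipartite with at least four parts. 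This collapses the residual case into the decomposable case above. With that dichotomy both of your loose ends are tied off; without it the argument is incomplete.
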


\begin{proof}
Let $G(A,B)$ be an irreducible $\{P_3+P_2,sinker\}$-free graph. Clearly, at least one
of the sets $A_1$ and $A_2$ is independent. Let $A_2$ be independent. If $A_3\neq \emptyset$,
then $H\triangleq G[A_1\cup A_3]$ is connected by Lemma 3. If $A_3$ is empty and $A_1$ is independent,
then $G[A]$ is bipartite and $\gamma''(G)=\gamma''_2(G)$. If $A_3$ is empty and $A_1$ is not independent, then $H$ is connected by Lemma 6.
If the graph $G[A]$ is $K_5$-free, then $\gamma''(G)=\gamma''_4(G)$ by Lemma 5. Hence, by Lemma 7,
$\gamma''(G)$ can be computed in polynomial time. We will assume that $H$ is a connected graph containing
a $K_4$.

Let $Q$ be a maximum clique of $H$ and $|Q|\geq 4$. Any element of $V(H)\setminus Q$ adjacent to an element of $Q$
must have exactly $|Q|-1$ neighbors in $Q$, as $G'$ is $sinker$-free. Since $G'$ is $sinker$-free and $H$ is connected,
there are no elements of $V(H)\setminus Q$ that have no neighbors in $Q$. Hence, if $a_1$ and $a_2$ belong to $V(H)\setminus Q$, then
they are adjacent if and only if $N(a_1)\cap Q\neq N(a_2)\cap Q$. Thus, $H$ is a complete multipartite graph with at least four parts.

There is no a vertex in $B$ adjacent to an element of $A_1\cup A_3$ and to an element of $A_2$ simultaneously.
Indeed, if such a vertex $b$ and its neighbors $a_1\in A_1\cup A_3, a_2\in A_2$ exist, then there is a clique $Q'$
of $H$ with at least three vertices such that $a_1\in Q'$. By Lemma 3, $a_2$ is adjacent to all vertices of $Q'$.
To avoid an induced $sinker$ in $G$, the vertex $b$ must be adjacent to at least two vertices of $Q'$. Hence, $b,a_2$, two
vertices in $Q'\cap N(b)$, and $x$ induce a $sinker$ in $G'$. So, $\gamma''(G)=\gamma''(G[A_1\cup A_3\cup N_B(A_1\cup A_3)])+
\gamma''(G[A_2\cup N_B(A_2)])$.

Let us show that there is no an element of $B$ adjacent to vertices in distinct parts of $H$. Let $b'$ be a vertex of this type,
$a'_1$ and $a'_2$ be its neighbors in distinct parts of $H$. There is a clique $Q''$ of $H$ containing exactly one representative
of each part of $H$ that also contains $a'_1$ and $a'_2$. Clearly, $|Q''|\geq 4$. To avoid an induced $sinker$ in $G'$, $b'$ must be adjacent
to all elements of $Q''$. If two vertices $a'\not \in Q''$ and $a''\in Q$ belong to the same part of $H$, then $b$ must be adjacent to $a'$.
Otherwise, $b$, any two elements of $Q''\setminus\{a''\},a'$, and $x$ induce a $sinker$ in $G'$. Therefore, $b'$ must be adjacent to
all vertices of $A_1\cup A_3$ which is impossible. So, $\gamma''(G[A_1\cup A_3\cup N_B(A_1\cup A_3)])=\sum\limits_{i=1}^{k}\gamma''(G[V_i\cup N_B(V_i)])$, where
$V_1,\ldots,V_k$ are all parts of $H$. By Lemma 7, the sum and $\gamma''(G[A_2\cup N_B(A_2)])$ can be computed in polynomial time, as
$\gamma''(G[A_2\cup N_B(A_2)])=\gamma''_1(G[A_2\cup N_B(A_2)])$ and $\gamma''(G[V_i\cup N_B(V_i)])=\gamma''_1(G[V_i\cup N_B(V_i)])$ for each $i$.
\end{proof}

\subsection{The classes $Free(\{P_3+P_2,K_{1,4}\}),Free(\{P_3+P_2,fork\}),Free(\{P_3+P_2,cricket\}),Free(\{P_3+P_2,bull\}),Free(\{P_3+P_2,kite\}),Free(\{P_3+P_2,dart\})$}

\begin{lemma}
The dominating set problem for $Free(\{P_3+P_2,K_{1,4}\})$ can be polynomially reduced to the same problem for $Free(\{P_5,K_{1,4}\})$.
\end{lemma}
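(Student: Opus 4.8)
The plan is to follow the same template used for the previous single-forbidden-graph reductions (Lemmas 9 and 10), reducing the computation of $\gamma''(G)$ for an irreducible $\{P_3+P_2,K_{1,4}\}$-free graph $G(A,B)$ to the $\{P_5,K_{1,4}\}$-free case. First I would invoke the structural setup available to us: by Lemma 6 we may assume $G[A_1]$ is connected (and $G[A_2]$ is connected or empty), and by Lemma 3 that $G[A_1\cup A_3]$ is connected. The bipartite escape hatch is the same as before: if $G[A_1\cup A_2]$ is bipartite, then $G[D]$ for a minimal dominating set $D$ is bipartite complete multipartite, so $\gamma''(G)=\gamma''_2(G)$, which Lemma 7 computes in polynomial time. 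So I would assume we are not in that case.

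The heart of the argument should be a decomposition exactly mirroring Lemma 9. I would first show that any $b\in B$ adjacent to a vertex of $A_1\cup A_3$ and a vertex of $A_2$ must be adjacent to \emph{all} of $A_1$ (and that $A_3=\emptyset$ and $A_2$ is independent), using $K_{1,4}$-freeness in place of $gem$-freeness: the key point is that the added vertices $x,y$ together with $b$ produce a forbidden star whenever $b$ has too many non-adjacent neighbors across the parts. Having ruled out such a $b$, the neighborhoods split and I obtain the additive formula
\begin{equation*}
\gamma''(G)=\gamma''\bigl(G[A_1\cup A_3\cup N_B(A_1\cup A_3)]\bigr)+\gamma''\bigl(G[A_2\cup (N_B(A_2)\setminus N_B(A_1\cup A_3))]\bigr),
\end{equation*}
which also holds trivially when no such crossing vertex exists. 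It then remains to handle each summand, where the relevant graph on the $A$-side is connected.

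For the summand with the connected $A$-side I would argue that $K_{1,4}$-freeness of $G'$ forces $G[A_1\cup A_3]$ to have bounded independence-related structure: specifically, since $x$ is adjacent to all of $A_1\cup A_3$, no vertex can have four pairwise-nonadjacent neighbors among $A_1\cup A_3$, so the relevant parts are few and one can peel off a maximum independent set $IS$ as in Lemma 9, using that any $b\in B$ adjacent to both $IS$ and its complement must dominate all of $IS$, and that $|IS|>1$ guarantees a private neighbor in $N_B(IS)\setminus N_B(V')$. Iterating this peeling reduces the connected $A$-side to the complete case, in which $G[A_1\cup A_3\cup N_B(A_1\cup A_3)]$ contains no induced $P_5$ (a long induced path would use too many independent vertices and recreate a $P_3+P_2$ or escape $K_{1,4}$-freeness once $x$ is restored), so each resulting instance is $\{P_5,K_{1,4}\}$-free.

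The main obstacle I anticipate is the precise analog of the $P_4$-free/modular-decomposition step that Lemma 9 borrowed from $gem$-freeness: with $K_{1,4}$ forbidden instead of $gem$, I cannot directly assert that $G[A_1\cup A_3]$ is $P_4$-free, so the clean cograph decomposition is unavailable. Instead the star constraint only bounds \emph{independence degree} (no vertex sees four independent neighbors), so I expect to replace the cograph argument with a direct bound on the number of distinct parts in the complete multipartite structure guaranteed by Lemma 5, and to argue that after the $IS$-peeling the residual graph cannot contain a long induced path without violating $K_{1,4}$-freeness in $G'$. Making this path-length bound airtight — and verifying that the reduction genuinely lands in $Free(\{P_5,K_{1,4}\})$ rather than merely $Free(\{P_5\})$ — is the step I would scrutinize most carefully.
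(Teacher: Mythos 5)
There is a genuine gap here, and it is essentially the one you flagged yourself. Your plan transplants the decomposition of Lemma 9, but both of its load-bearing steps are specific to $gem$-freeness and do not survive the substitution. First, the claim that a crossing vertex $b\in B$ with neighbors $a^1_1\in A_1$ and $a_2\in A_2$ must be adjacent to all of $A_1$ was proved in Lemma 9 by exhibiting the $gem$ induced by $b,a^1_1,a^2_1,a_2,x$; the same five vertices do not induce a $K_{1,4}$ (no vertex among them is adjacent to the other four with those four independent), so the additive split of $\gamma''(G)$ is not established. Second, the peeling of a maximum independent set $IS$ in Lemma 9 rests on $G[A_1\cup A_3]$ being $P_4$-free (a cograph), so that $IS$ is completely joined to its complement; $K_{1,4}$-freeness of $G'$ gives no such modular structure, only a bound on the number of pairwise non-adjacent neighbors of $x$, and your proposed replacement (bounding the number of parts via Lemma 5) does not by itself yield the claim that a $B$-vertex meeting both $IS$ and $V'$ dominates all of $IS$, nor the final $P_5$-freeness of the residual instance. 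As written, the reduction does not land in $Free(\{P_5,K_{1,4}\})$.

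The step you are missing is that no decomposition is needed at all. The paper simply passes to the graph $G^*$ obtained by turning $A$ into a clique, for which $\gamma(G^*)=\gamma''(G)$. Since $A$ is a clique and $B$ is independent, $G^*$ is automatically $P_5$-free. And if $G^*$ contained an induced $K_{1,4}$ with center $a$ and leaves $b_1,\dots,b_4$, then $a\in A$ (a center in $B$ would need four independent neighbors inside the clique $A$) and at least three leaves lie in $B$ (two leaves in $A$ would be adjacent); these three leaves together with $a$ and $x$ or $y$ already induce a $K_{1,4}$ in $G'$, a contradiction. So $G^*\in Free(\{P_5,K_{1,4}\})$ and the reduction is immediate. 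This is the same trick used for the $fork$ case in Lemma 12, not the $gem$-style case analysis of Lemma 9.
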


\begin{proof}
Let $G(A,B)$ be an irreducible $\{P_3+P_2,K_{1,4}\}$-free graph. Let us show that $G^*$ is $\{P_5,K_{1,4}\}$-free. Its $P_5$-freeness is obvious. Suppose that $G^*$ has a $K_{1,4}$ induced by vertices $a,b_1,b_2,b_3,b_4$, where $(a,b_1),(a,b_2),(a,b_3),(a,b_4)$ are edges of this $K_{1,4}$.
Clearly, $a\in A$. There are at least three vertices among $b_1,b_2,b_3,b_4$ belonging to $B$. This three vertices, $a$, $x$ or $y$ induce a $K_{1,4}$ in $G'$. We have a contradiction.
\end{proof}

\begin{lemma}
The dominating set problem for $Free(\{P_3+P_2,fork\})$ can be polynomially reduced to the same problem for $Free(\{P_5,fork\})$.
\end{lemma}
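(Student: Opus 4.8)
The plan is to mirror the proof of Lemma 11 almost verbatim, reducing the problem to the computation of $\gamma''(G)$ on irreducible graphs and then checking that the associated split graph $G^*$ lands in the target class. First I would invoke the reduction machinery of the previous sections: computing $\gamma(G)$ for a $\{P_3+P_2,fork\}$-free graph reduces, through the polynomial-time computability of $i(G)$ together with Lemmas 1--3, to computing $\gamma''(G)$ for irreducible $\{P_3+P_2,fork\}$-free graphs $G(A,B)$, and by the identity $\gamma(G^*)=\gamma''(G)$ it suffices to solve the dominating set problem on the graphs $G^*$. Thus the whole lemma comes down to the claim that $G^*$ is $\{P_5,fork\}$-free.

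The $P_5$-freeness is immediate, exactly as in Lemma 11: since $A$ is a clique and $B$ is an independent set of $G^*$, the graph $G^*$ is a split graph and so has no induced $P_5$. The substance is the $fork$-freeness, and here I would exploit two numerical features of the $fork$: its maximum clique has two vertices and its maximum independent set has three. Suppose $G^*$ contains an induced $fork$ on a vertex set $S$. Because $A$ is a clique of $G^*$, the vertices of $S\cap A$ are pairwise adjacent, hence form a clique of the $fork$, so $|S\cap A|\leq 2$; because $B$ is independent, $S\cap B$ is an independent set of the $fork$, so $|S\cap B|\leq 3$. As $|S|=5$, both bounds are tight: $|S\cap A|=2$ and $|S\cap B|=3$. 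Writing the $fork$ as a center $c$ with leaves $p,q$, a subdivision vertex $r$ adjacent to $c$ and to the end vertex $s$ (edges $cp,cq,cr,rs$), the two $A$-vertices must form an edge of the $fork$ whose complementary triple inside $S$ is independent; the only such edge is $cr$, forcing $S\cap A=\{c,r\}$ and $S\cap B=\{p,q,s\}$. Indeed, the edges $cp,cq,rs$ each leave an adjacent pair among the remaining three vertices, and the only other independent triple $\{p,q,r\}$ would force $S\cap A=\{c,s\}$ with $c,s$ non-adjacent, contradicting that $A$ is a clique of $G^*$.

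It remains to transfer this configuration back to $G$, and this is where having forbidden both $fork$ and $P_3+P_2$ pays off. The key point is that $G^*$ differs from $G$ only in edges inside $A$, so all $A$--$B$ adjacencies and all $B$--$B$ non-adjacencies among $S$ coincide in $G$ and $G^*$; the sole adjacency in question is the pair $\{c,r\}\subseteq A$. If $c$ and $r$ are adjacent in $G$, then $G[S]$ is already an induced $fork$ in $G$, contradicting $fork$-freeness. If $c$ and $r$ are non-adjacent in $G$, then $c$ still dominates the leaves $p,q$ while $r$ is adjacent only to $s$ within $S$, so $\{p,c,q\}$ induces a $P_3$, $\{r,s\}$ induces a $P_2$, and there are no edges between the two parts, i.e. $G[S]$ is an induced $P_3+P_2$, contradicting $P_3+P_2$-freeness. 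Either way we reach a contradiction, so $G^*$ is $fork$-free, and the reduction follows. I expect the only delicate step to be the enumeration forcing $S\cap A=\{c,r\}$; once the $2+3$ split is pinned down, the split-graph structure of $G^*$ makes the configuration essentially unique, and the final dichotomy between a genuine edge (yielding a $fork$) and a missing one (yielding a $P_3+P_2$) closes the argument.
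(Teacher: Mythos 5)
Your proposal is correct and follows essentially the same route as the paper: reduce to the irreducible graph $G(A,B)$, observe that only $fork$-freeness of $G^*$ needs checking, locate the two $A$-vertices of a hypothetical induced $fork$ at its center and subdivision vertex, and derive a contradiction from either an induced $fork$ (if that pair is an edge of $G$) or an induced $P_3+P_2$ (if it is not). The paper merely asserts the placement of the vertices into $A$ and $B$ with a ``clearly,'' whereas you spell out the clique/independent-set counting; the substance is identical.
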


\begin{proof}
Let $G(A,B)$ be an irreducible $\{P_3+P_2,fork\}$-free graph. To prove the lemma, we only need
to show that $G^*$ is $fork$-free. Suppose that $G^*$ has a $fork$ induced by
vertices $x_1,x_2,x_3,y_1,y_2$, where $(x_1,y_1),(x_2,y_1),(y_1,y_2),\\ (y_2,x_3)$ are edges of the $fork$. Clearly, $x_1,x_2,x_3\in B$ and $y_1,y_2\in A$.
The graph $G$ must have the edge $(y_1,y_2)$, otherwise $x_1,x_2,x_3,y_1,y_2$ induce a $P_3+P_2$. Then $G$ is not $fork$-free. We have a contradiction.
\end{proof}

\begin{lemma}
For every of the classes $Free(\{P_3+P_2,cricket\})$ and $Free(\{P_3+P_2,bull\})$, the dominating
set problem can be polynomially reduced to the same problem for $Free(\{P_5,fork\})$.
\end{lemma}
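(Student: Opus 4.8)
The plan is to follow the scheme of Lemmas 12 and 13: for an irreducible $\{P_3+P_2,cricket\}$-free (respectively, $\{P_3+P_2,bull\}$-free) graph $G(A,B)$, I would show that the auxiliary graph $G^*$ lies in $Free(\{P_5,fork\})$. Since $\gamma(G^*)=\gamma''(G)$ and computing $\gamma''(G)$ for irreducible graphs is exactly what the reduction of Sections 4--5 requires, this yields the claimed polynomial reduction. In $G^*$ the set $A$ is a clique and $B$ is independent, so $G^*$ is a split graph and therefore has no induced $P_5$, because a $P_5$ admits no partition of its vertices into a clique and an independent set. Hence only the $fork$-freeness of $G^*$ needs work.

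First I would pin down how a $fork$ can sit inside the split graph $G^*$. Writing the $fork$ as a centre $c$ adjacent to $\ell_1,\ell_2,m$ together with a tail $m-\ell_3$, I note that its maximum clique has size two and its unique independent set of size three whose complement is an edge is $\{\ell_1,\ell_2,\ell_3\}$. Consequently the only possibility is $\ell_1,\ell_2,\ell_3\in B$ and $c,m\in A$, with $\ell_1,\ell_2\in N_B(c)\setminus N_B(m)$ and $\ell_3\in N_B(m)\setminus N_B(c)$. If the edge $c-m$ were absent in $G$, i.e. were one of the edges added to form $G^*$, then $\ell_1-c-\ell_2$ together with $m-\ell_3$ would induce a $P_3+P_2$ in $G$, a contradiction; so $c-m\in E(G)$, and the five vertices already induce a $fork$ in $G$ itself.

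The decisive step is to promote this $fork$ to a forbidden subgraph of $G'$ by adjoining one of the apex vertices $x,y$ (adjacent to $A_1\cup A_3$ and $A_2\cup A_3$, respectively). In the main case $c$ and $m$ share a neighbour among $\{x,y\}$; this happens whenever $c$ or $m$ lies in $A_3$, or both lie in $A_1$, or both lie in $A_2$. Say $x$ is adjacent to both. As $x$ has no neighbour in $B$, the vertices $\ell_1,\ell_2,\ell_3$ avoid $x$, while $\{x,c,m\}$ is a triangle. Then $\{x,c,m,\ell_1,\ell_2\}$ induces a $cricket$ (the two pendants $\ell_1,\ell_2$ sitting on the same vertex $c$), and $\{x,c,m,\ell_1,\ell_3\}$ induces a $bull$ (the pendants $\ell_1$ on $c$ and $\ell_3$ on $m$ sitting on distinct triangle vertices). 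Either conclusion contradicts the relevant hypothesis on $G'$, so no such $fork$ survives.

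The hard part will be the residual case in which $\{c,m\}$ is split, say $c\in A_1$ and $m\in A_2$, so that neither $x$ nor $y$ sees both; here $x,y,c,m$ merely induce a harmless $C_4$, and the naive completions produce only $K_{1,4}$'s, kites, or longer-tailed graphs that are not excluded. To close this case I would use the remaining irreducibility data: by Lemma 3 every vertex of $A_1$ is completely joined to $A_2\cup A_3$, so any vertex of $A_3$ is a common neighbour of $c$ and $m$; and since $B$ has no degree-one vertex, $\ell_3$ has a second neighbour $a\in A\setminus\{c,m\}$, again adjacent to $m$ by Lemma 3. The aim is to locate a common neighbour $w$ of $c$ and $m$ missing $\ell_1,\ell_2$ (for the $cricket$) or missing $\ell_1,\ell_3$ (for the $bull$), thereby completing the forbidden subgraph; the genuinely delicate subcase is $A_3=\emptyset$ with $c,m$ having no common neighbour in $B$, where one must combine the minimum-degree and the no-inclusion properties of the irreducible graph to force such a $w$ or to exhibit a $P_3+P_2$ outright. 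I expect this split case to be the main obstacle, and it is where the separate structures of $cricket$ and $bull$ must be tracked individually.
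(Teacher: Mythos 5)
There is a genuine gap, and it is exactly where you suspect: the split case $c\in A_1$, $m\in A_2$ cannot be closed, because the claim you are trying to prove --- that $G^*$ itself is $fork$-free --- is false for these two classes. Here is a concrete obstruction. Take $A_1=\{c,a_1,a_1'\}$ and $A_2=\{m,a_2,a_2'\}$, both independent, with every vertex of $A_1$ adjacent to every vertex of $A_2$ (as Lemma 3 forces), $A_3=\emptyset$, and $B=\{\ell_1,\ldots,\ell_6\}$ where $N_B(c)=\{\ell_1,\ell_2\}$, $N_B(a_1)=\{\ell_1,\ell_5\}$, $N_B(a_1')=\{\ell_2,\ell_5\}$, $N_B(m)=\{\ell_3,\ell_4\}$, $N_B(a_2)=\{\ell_3,\ell_6\}$, $N_B(a_2')=\{\ell_4,\ell_6\}$. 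All irreducibility conditions hold, and the graph $G'$ obtained by adding the apices $x,y$ is triangle-free, hence trivially $cricket$-free and $bull$-free, and one checks it is $P_3+P_2$-free (every induced $P_3$ contains a vertex of $A\cup\{x,y\}$, whose closed neighbourhood meets every remaining edge). Yet $\{\ell_1,\ell_2,c,m,\ell_3\}$ induces a $fork$ in $G$ and in $G^*$: there is no common neighbour $w$ of $c$ and $m$ to be found, and no $P_3+P_2$ to be exhibited. So the residual case is not merely ``the main obstacle''; it defeats the Lemma-12 strategy outright.

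The paper avoids this by inserting a decomposition step \emph{before} completing $A$ to a clique. Using the existence of private neighbours $b'\in N_B(a_1)\setminus N_B(a_2)$ and $b''\in N_B(a_2)\setminus N_B(a_1)$ (guaranteed by irreducibility), one shows via Lemma 3 that two vertices of $A$ with a common neighbour in $B$ must lie in the same class $A_i$: if they lie in different classes one builds a $bull$ (triangle $a_1,a_2,b$ with pendants $b',b''$), a $cricket$ (the same triangle with pendants $b'$ and an apex), or a $P_3+P_2$, inside $G'$. Consequently $N_B(A_1),N_B(A_2),N_B(A_3)$ are pairwise disjoint and $\gamma''(G)=\gamma''(G_1)+\gamma''(G_2)+\gamma''(G_3)$ with $G_i=G[A_i\cup N_B(A_i)]$. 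Only then is each $A_i$ completed to a clique; in $G_i^*$ the two $A$-endpoints of any potential $fork$ automatically share an apex, so your ``main case'' argument (which is correct, and is essentially the paper's) finishes the proof. You should restructure your argument along these lines rather than trying to force $fork$-freeness of the undecomposed $G^*$.
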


\begin{proof}
Let ${\mathcal X}$ be one of the two classes, $G(A,B)$ be an irreducible graph in ${\mathcal X}$.
Let $a_1$ and $a_2$ be elements of $A$ having a common neighbor $b\in B$. Taking into account that there are
vertices $b'\in N_B(a_1)\setminus N_B(a_2)$ and $b''\in N_B(a_2)\setminus N_B(a_1)$, it
is easy to see that $a_1$ and $a_2$ belong to exactly one of the sets $A_1,A_2,A_3$ by Lemma 3.
Hence, $\gamma''(G)=\gamma''(G_1)+\gamma''(G_2)+\gamma''(G_3)$, where $G_i$ is a subgraph of $G$ induced by $A_i\cup N_B(A_i)$.
Similar to the reasonings of the previous lemma, it is easy to check that all graphs $G^*_1,G^*_2,G^*_3$ are all $\{P_5,fork\}$-free. So, the lemma holds.
\end{proof}

\begin{lemma}
For every of the classes $Free(\{P_3+P_2,kite\})$ and $Free(\{P_3+P_2,dart\})$, the dominating
set problem can be polynomially reduced to the same problem for $Free(\{P_5,kite\})$ and $Free(\{P_5,dart\})$, respectively.
\end{lemma}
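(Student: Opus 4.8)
The plan is to follow the scheme of Lemmas 11--13: given an irreducible graph $G(A,B)$ in the class, reduce the computation of $\gamma''(G)$ to an instance of the dominating set problem in the target class by passing to $G^*$. Recall that $\gamma''(G)=\gamma(G^*)$, so it suffices to verify that $G^*$ is $\{P_5,kite\}$-free (respectively $\{P_5,dart\}$-free); then an algorithm for $Free(\{P_5,kite\})$ (respectively $Free(\{P_5,dart\})$) computes $\gamma''(G)$, and by the reduction of Section 4 together with Lemmas 1 and 2 the whole dominating set problem for the class reduces to the dominating set problem in the target class. The $P_5$-freeness of $G^*$ is immediate, exactly as in Lemma 11: since $A$ is a clique and $B$ is independent, any induced path uses at most two vertices of $A$, and these must be consecutive, so an induced $P_5$ would force three pairwise non-adjacent vertices of $B$ into a path of length four, which is impossible.

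The core of the argument is to show that $G^*$ inherits $kite$-freeness (respectively $dart$-freeness). Suppose, for contradiction, that $G^*$ contains an induced $kite$. Because $A$ is a clique and $B$ is independent, the vertex set of this $kite$ splits into a part contained in $A$ (a clique of the $kite$) and a part contained in $B$ (an independent set of the $kite$). Inspecting the cliques and independent sets of a $kite$ shows that the only possibility is a triangle $\{a,b,c\}\subseteq A$ together with a non-adjacent pair $\{d,e\}\subseteq B$ with $d\sim a$, $d\sim b$, $e\sim c$, and no further edges; note that the edges $ad,bd,ce$ and all non-edges between $A$ and $B$ are the same in $G$ and $G^*$, so only the three edges inside $\{a,b,c\}$ may differ between $G$ and $G^*$. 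For the $dart$ one argues identically, with the extra possibility, made available by the independent triple of a $dart$, that the clique part is a single edge $\{a,b\}\subseteq A$ and the independent part a triple $\{r,s,t\}\subseteq B$.

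I would then resolve the cases according to which of the three edges of $\{a,b,c\}$ are present in $G$. If all three are present, $G$ itself contains the $kite$, a contradiction. If $\{a,b,c\}$ is independent in $G$, then $a$-$d$-$b$ together with the edge $c$-$e$ is an induced $P_3+P_2$, again a contradiction. If only the edge $ab$ is present, Lemma 5 applies to the triple $(a,b;c)$ and forces $N_B(c)\subseteq N_B(a)\cup N_B(b)$, contradicting $e\in N_B(c)\setminus(N_B(a)\cup N_B(b))$. The remaining, genuinely delicate, configurations are those in which $G[\{a,b,c\}]$ has exactly one or exactly two edges not covered by the previous cases. Here I would use that, by Lemma 3, two non-adjacent vertices of $A$ must lie in a common part of the partition $A_1,A_2,A_3$, hence share a common neighbor among $x,y$ in $G'$; combining this common neighbor with the private $B$-neighbors guaranteed by irreducibility (no two vertices of $A$ have comparable neighborhoods in $B$, and $B$ has no degree-one vertices) and with the triangle structure produces a diamond together with a pendant at one of its degree-two vertices, that is, an induced $kite$ already inside $G$ (or an induced $P_3+P_2$), contradicting the choice of $G$.

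The main obstacle is precisely this last step. In the reductions for $K_{1,4}$ and $fork$ a single missing edge of $A$ immediately yields a $P_3+P_2$, whereas here a partially present triangle need not create a forbidden subgraph by itself; the contradiction only appears after the private neighbors forced by irreducibility and the vertices $x,y$ of $G'$ are brought in, and the bookkeeping of which private neighbor attaches to which part is where the care is needed. For the $dart$ the same mechanism must be run once more on the $2{+}3$ split, locating the two clique vertices in a common part and again invoking private neighbors to force the $dart$ (or a $P_3+P_2$) inside $G$. Once $G^*$ is shown to lie in $Free(\{P_5,kite\})$ (respectively $Free(\{P_5,dart\})$), the stated reduction follows exactly as in Lemmas 11--13.
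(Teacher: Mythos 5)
Your plan hinges on the claim that for an irreducible $\{P_3+P_2,kite\}$-free graph $G(A,B)$ the completion $G^*$ is $kite$-free (and analogously for $dart$), and your own write-up correctly isolates the dangerous configurations (a triangle of $G^*[A]$ with one or two edges missing in $G$) as ``the main obstacle.'' That obstacle is not surmountable: the claim is false. Take $A=A_1=\{a,b,c,f\}$ with $E(G[A])=\{ac,bf\}$, and $B=\{d_{ab},d_{af},d_{cb},d_{cf}\}$ where $d_{uv}$ is adjacent exactly to $u$ and $v$. This graph is connected, $B$ is independent with all degrees equal to $2$, the four sets $N_B(\cdot)$ are pairwise incomparable, no vertex of $B$ sees all of $A$, and the graph $G'$ obtained by attaching $x$ to $A$ and $y$ to $x$ is triangle-free except for the two edge-disjoint triangles $xac$ and $xbf$; hence $G$ and $G'$ contain no diamond and a short check shows they contain no $P_3+P_2$ either. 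So $G$ is irreducible and $\{P_3+P_2,kite,dart\}$-free. Yet in $G^*$ the set $\{a,b,c,d_{ab},d_{cf}\}$ induces a $kite$ (diamond on $a,b,c,d_{ab}$ with the pendant $d_{cf}$ at $c$) and $\{a,b,c,d_{ab},d_{af}\}$ induces a $dart$. Your proposed rescue --- forcing a $kite$ or a $P_3+P_2$ inside $G$ or $G'$ from the partially present triangle --- cannot work here, because $G'$ is diamond-free and $P_3+P_2$-free. This also explains why Lemmas 11 and 12 go through with $G^*$ while this lemma cannot: for $K_{1,4}$ and $fork$ the forbidden pattern needs only one vertex of $A$, respectively one non-edge of $A$, so the added edges are harmless, whereas a $kite$/$dart$ lives on a triangle that $G^*$ may create from scratch.

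The paper therefore does not use $G^*$ at all for these two classes. It first shows that $G[A_1\cup A_3]$ and $G[A_2\cup A_3]$ are $P_3$-free (for an edge $a_1a_2$ in an induced $P_3$ of $G[A_1\cup A_3]$ one gets $N_B(a_1)\cap N_B(a_2)=\emptyset$ via a $dart$/$kite$ built on $x$, and then the diamond $\{x,a_1,a_2,a_3\}$ plus a private neighbor finishes). It then splits into cases: if $A_3\neq\emptyset$ or $G[A_2]$ is a clique, $A$ is already a clique in $G$ and $G$ itself is $\{P_5,kite\}$- or $\{P_5,dart\}$-free; if $G[A]$ is bipartite (which is exactly what happens in the counterexample above, where $G[A]=2K_2$), it computes $\gamma''_2(G)$ by Lemma 7 instead of reducing to the target class; in the remaining case it proves that no vertex of $B$ sees both $A_1$ and $A_2$, so $\gamma''(G)$ splits as $\gamma''(G[A_1\cup N_B(A_1)])+\gamma''_1(G[A_2\cup N_B(A_2)])$ with only the first summand handed to the $\{P_5,kite\}$/$\{P_5,dart\}$ oracle. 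If you want to salvage your write-up, you must replace the $G^*$ step by a decomposition of this kind; as it stands the reduction is incorrect, not merely incomplete.
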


\begin{proof}
Let $G$ be an irreducible $\{P_3+P_2,kite\}$-free or $\{P_3+P_2,dart\}$-free graph. We
will show that $G[A_1\cup A_3]$ and $G[A_2\cup A_3]$ are $P_3$-free. Assume that $G[A_1\cup A_3]$ contains vertices $a_1,a_2,a_3\in A$ such that $(a_1,a_2)\in E(G),(a_2,a_3)\in E(G)$, and $(a_1,a_3)\not \in E(G)$.
Clearly, $N_B(a_1)\cap N_B(a_2)=\emptyset$. Otherwise, an element of $N_B(a_1)\setminus N_B(a_2)$, and an element of $N_B(a_1)\cap N_B(a_2),
a_1,a_2,x$ and an element of $N_B(a_1)\cap N_B(a_2),a_1,a_2,x,y$ induce a $dart$ and a $kite$ in $G'$, respectively.
If $A_3$ is non-empty, then $A_1\cup A_3$ and $A_2\cup A_3$ must be cliques by Lemma 3. Moreover, by Lemma 3, $A$ must be a clique,
and $G$ is $\{P_5,dart\}$-free or $\{P_5,kite\}$-free.

Let $A_3$ be empty. If $G[A]$ is bipartite, then $\gamma''(G)=\gamma''_2(G)$ and $\gamma''(G)$ can be computed in polynomial time.
Otherwise, by Lemma 6, we may assume that $G[A_1]$ is a clique and $G[A_2]$ is a clique or an empty graph. If $G[A_2]$ is a clique,
then $G$ is $\{P_5,dart\}$-free or $\{P_5,kite\}$-free by Lemma 3. If $A_2$ is independent and $\min(|A_1|,|A_2|)=1$, then $G[A]$ is also bipartite or a clique. Assume
that $A_1$ and $A_2$ have at least two vertices and $A_2$ is independent.

Let us show that there is no a vertex in $B$ adjacent to a vertex in $A_1$ and a vertex in $A_2$. Let $b$ be such a vertex.
If $G$ is $dart$-free, then $b$ is adjacent to all vertices of $A_1$. Then, for each $a',a''\in A_1$, an element of $N_B(a')\setminus N_B(a'')$,
$b,a',a'',x$ induce a $dart$ in $G'$. If $G$ is $kite$-free, then there are vertices $a_1,a_2\in A_1$ such that $(a_1,b)\not \in E(G), (a_2,b)\in E(G)$.
The set $A_2$ contains an element $a_3$. Clearly, $N_B(a_1)\cap N_B(a_2)=\emptyset$, otherwise an element of $N_B(a_1)\cap N_B(a_2),a_1,a_2,x,y$ induce a $kite$ in $G'$. Hence, an element of $N_B(a_1)\setminus N_B(a_3),a_1,a_2,a_3,b$ induce a $kite$ in $G$.

So, $\gamma''(G)=\gamma''(G[A_1\cup N_B(A_1)])+\gamma''(G[A_2\cup N_B(A_2)])$ and $\gamma''(G[A_2\cup N_B(A_2)])=\gamma''_1(G[A_2\cup N_B(A_2)])$. Hence, by Lemma 7,
computing $\gamma''(G)$ can be polynomially reduced to computing $\gamma''(G[A_1\cup N_B(A_1)])$, where $G[A_1\cup N_B(A_1)]$ is $\{P_5,dart\}$- or $\{P_5,kite\}$-free.
\end{proof}

\subsection{The class $Free(\{fork,K_3+K_2\})$}

Two non-adjacent vertices $x$ and $y$ of a graph are said to be \emph{quasi-twins} if $N(x)\subseteq N(y)$.
If $x$ and $y$ are quasi-twins of a graph $G$, then $\gamma(G)=\gamma(G\setminus \{y\})$. Hence,
the dominating set problem in a hereditary class can be polynomially reduced to the same problem for
its graphs without quasi-twins.

\begin{lemma}
Let $G$ be a connected $\{fork,K_3+K_2\}$-free graph without quasi-twins, and let $G\not \in Free(\{P_5\})$. Let $P\triangleq (x_1,x_2,\ldots,x_k)$ be a maximum
induced path of $G$ if $G\in Free(\{P_7\})$, otherwise let $P$ be a maximal induced path of $G$ with at least seven vertices. If a vertex $x\in V(G)\setminus V(P)$ has a neighbor $y\not \in \bigcup\limits_{v\in V(P)} N(v)$, then $x$ must be adjacent to all vertices of $P$.
\end{lemma}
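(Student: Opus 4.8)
The claim concerns a maximum (or maximal long) induced path $P=(x_1,\dots,x_k)$ in a connected $\{fork,K_3+K_2\}$-free graph $G$ without quasi-twins, and a vertex $x\notin V(P)$ having a private neighbor $y$ that is attached to no vertex of $P$. The plan is to argue by contradiction: assume $x$ fails to be adjacent to some vertex of $P$, and then exhibit a forbidden induced subgraph or contradict the maximality/maximum-length property of $P$. The key structural leverage is that $y$ is attached to $P$ only through $x$, so any small configuration involving $y$, $x$, and a stretch of $P$ is tightly constrained.

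\textbf{The case analysis I would set up.}
First I would observe that since $x$ is adjacent to $y$ and $y$ has no neighbor on $P$, the edge $xy$ dangles off $P$ at whatever vertices $x$ touches. If $x$ is adjacent to exactly one vertex $x_i$ of $P$, then depending on whether $x_i$ is an endpoint or an interior vertex of $P$, the set $\{x,y,x_i\}$ together with neighbors of $x_i$ along $P$ should produce either a $fork$ (from the degree-three vertex $x_i$ with the pendant path through $x$ and $y$) or a longer induced path contradicting maximality. Concretely, if $x_i$ is interior with path-neighbors $x_{i-1},x_{i+1}$, then $x_{i-1},x_{i+1},x_i,x,y$ ought to induce a $fork$, since $x$ and $y$ are nonadjacent to the rest of $P$. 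This handles the single-attachment case. The main work is the case where $x$ is adjacent to several, but not all, vertices of $P$: here I would locate a vertex $x_j$ of $P$ \emph{not} adjacent to $x$ and examine the position of $x_j$ relative to the attachment set $N(x)\cap V(P)$.

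\textbf{Exploiting $K_3+K_2$-freeness.}
The second forbidden graph is crucial for controlling how $x$ can attach to $P$. If $x$ were adjacent to two vertices $x_a,x_b$ of $P$ that are themselves adjacent (consecutive on $P$), then $\{x,x_a,x_b\}$ forms a triangle; combined with $y$ and a far-away edge of $P$ disjoint from this triangle, $K_3+K_2$-freeness forbids such a disjoint $K_2$, forcing $x$'s non-neighborhood on $P$ to be small or forcing additional adjacencies. I would use this to show that the set of $P$-vertices adjacent to $x$ must be an interval (or nearly so), and that any gap in this interval, or any vertex of $P$ beyond it, combines with $y$ to create a $fork$ or a $K_3+K_2$. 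The delicate point is bookkeeping: when $P$ is only \emph{maximal} (in the non-$P_7$-free case with $k\ge 7$) rather than maximum, I cannot contradict length, so every subcase must terminate in an induced $fork$ or $K_3+K_2$ rather than a path-extension argument. I expect this uniformity — making the forbidden-subgraph argument go through without ever relying on maximum length when only maximality is available — to be the main obstacle, and I would resolve it by always keeping the pendant edge $xy$ as the ``$+K_2$'' or as the subdivided leg of a $fork$, so that the seven vertices of $P$ supply whatever triangle, claw, or long path the forbidden graph requires.
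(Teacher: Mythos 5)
Your outline disposes of the easy cases correctly (exactly one neighbour on $P$ gives a $fork$ at an interior attachment point or contradicts maximality at an end; three or more neighbours that are not all consecutive give a $fork$ centred at $x$ with leaf $y$), but it breaks down precisely where the paper has to work hardest: the case where $x$ is adjacent to exactly two \emph{consecutive} vertices of $P$. In that configuration no forbidden subgraph arises from $x$, $y$ and $P$ alone. Concretely, take $P=(x_1,\dots,x_5)$ and let $x$ be adjacent exactly to $x_2$, $x_3$ and to the pendant vertex $y$: this seven-vertex graph contains no induced $fork$ (no vertex centres a claw) and no induced $K_3+K_2$ (the only triangle is $\{x,x_2,x_3\}$, and every edge of the rest of the graph meets its neighbourhood), and $P$ remains a maximum induced path. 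So your plan of ``always terminating in an induced $fork$ or $K_3+K_2$'' cannot succeed in this subcase, and the proposal does not say how it would.

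The missing ingredient is the hypothesis you never invoke: $G$ has no quasi-twins. In the configuration above $x_5$ and $x_3$ are quasi-twins, so that hypothesis supplies an auxiliary vertex $x'\in N(x_5)\setminus N(x_3)$; the paper then pins down $N(x')\cap V(P)$ using $fork$- and $K_3+K_2$-freeness together with the maximality of $P$, forces $x'$ to be adjacent to $x$ but not to $y$, and only then exhibits an induced $fork$. (Along the way, $K_3+K_2$-freeness first bounds $k\le 6$ in this case, so the maximal-versus-maximum distinction you worry about simply disappears, and the two consecutive neighbours are forced into the middle of $P$.) Without importing an extra vertex via quasi-twin-freeness the statement is false, as the example shows, so the gap is essential rather than a matter of bookkeeping.
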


\begin{proof}
Assume the opposite. The path $P$ must have at least five vertices. The vertex $x$ cannot have exactly one neighbor in $V(P)$, otherwise this
neighbor must be an end of $P$, and $P$ is not maximal. If $x$ has more than two neighbors in $V(P)$, then they must be consecutive in $P$ to
avoid an induced $fork$. Nevertheless, $G$ contains an induced $fork$, as $x$ cannot be adjacent to all vertices of $P$. Hence, $x$ must
have exactly two neighbors on $P$. They must be adjacent, as $G$ is $fork$-free. Moreover, $k\leq 6$, as $G$ is $K_3+K_2$-free. Hence, $P$ is maximum.
We may assume that $x_2$ and $x_3$ are the neighbors in the case $k=5$, $x_3$ and $x_4$ are the neighbors for $k=6$, since $G$ is $K_3+K_2$-free.
Suppose that $k=5$. As $G$ has no quasi-twins, there is a vertex $x'\in N(x_5)\setminus N(x_3)$. As $P$ is maximum,
$x'$ must have a neighbor in $V(P)\setminus \{x_5\}$. As $G$ is $\{fork,K_3+K_2\}$-free, $N(x')\cap V(P)=\{x_1,x_5\}$ or
$N(x')\cap V(P)=\{x_1,x_2,x_5\}$ or $N(x')\cap V(P)=\{x_1,x_2,x_4,x_5\}$ or $N(x')\cap V(P)=\{x_1,x_4,x_5\}$. Hence, $x'$ and $y$ cannot be adjacent.
Due to the maximality of $P$, $x'$ must be adjacent to $x$ in the case, when $N(x')\cap V(P)=\{x_1,x_2,x_5\}$. It is also true
in all three remaining cases, as $G$ is $K_3+K_2$-free. Hence, $G$ contains an induced $fork$. We have a contradiction. The case $k=6$ can be considered similarly.
\end{proof}

\begin{lemma} The dominating set problem for $\{fork,K_3+K_2\}$-free graphs can be polynomially reduced to the
same problem for $\{P_5,fork,K_3+K_2\}$-free graphs
\end{lemma}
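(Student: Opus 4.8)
The plan is to begin with the standard reductions. Since the domination number is additive over connected components, the dominating set problem on $G$ reduces to the same problem on each component, so I may assume $G$ is connected; and by the remark preceding Lemma 16, I may further assume that $G$ has no quasi-twins, as deleting one vertex of each quasi-twin pair preserves $\gamma$. If the resulting connected, quasi-twin-free graph $G$ already lies in $Free(\{P_5\})$, then it is $\{P_5,fork,K_3+K_2\}$-free and there is nothing to prove. Otherwise I would invoke Lemma 16 to fix an induced path $P=(x_1,\ldots,x_k)$ with $k\geq 5$: a maximum induced path if $G\in Free(\{P_7\})$, and a maximal induced path on at least seven vertices otherwise.

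Next I would localize the obstruction to $P_5$-freeness around $P$. Let $U\triangleq V(G)\setminus\big(V(P)\cup\bigcup\limits_{v\in V(P)}N(v)\big)$ be the set of vertices that $V(P)$ fails to dominate, and let $W$ be the set of vertices adjacent to every vertex of $P$; these three sets $V(P),U,W$ are pairwise disjoint. Lemma 16 states precisely that every vertex outside $V(P)$ with a neighbor in $U$ belongs to $W$, so $N(u)\subseteq U\cup W$ for each $u\in U$. Two facts drive the reduction: $U$ can be dominated only from within $U\cup W$, and a single vertex of $W$ dominates all of $V(P)$ at once. I would then use $fork$- and $(K_3+K_2)$-freeness to constrain the adjacencies among $U$, $W$, and the interior of $P$. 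For instance, if $uu'$ is an edge inside $U$ and some $w\in W$ is adjacent to neither $u$ nor $u'$, then the triangle $\{x_2,x_3,w\}$ together with the edge $uu'$ induces a $K_3+K_2$; arguments of this kind pin down how $W$ meets $U$ and should let me show that the subgraph governing the domination of $U$ is $P_5$-free, or becomes so after deleting a controlled set, using that $G$ has no quasi-twins and that $P$ is maximum or maximal.

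Finally, following the decomposition style of the preceding reductions, I would aim to express $\gamma(G)$ as a minimum, taken over polynomially many choices, of a constant plus domination numbers of $P_5$-free induced subgraphs of $G$, each of which is automatically $\{fork,K_3+K_2\}$-free. The separation is along $W$. If $W=\emptyset$, then $N(u)\subseteq U$ for every $u\in U$, so connectivity forces $U=\emptyset$ and $V(P)$ dominates $G$; in the $P_7$-free regime this gives $\gamma(G)\leq k\leq 6$ and is settled by brute force. If $W\neq\emptyset$, I would branch on whether an optimum dominating set $D$ meets $W$: when it does, guessing a vertex $w\in D\cap W$ lets me account for $V(P)$ (all of which $w$ dominates) by the single vertex $w$ and reduce the domination of the rest to the $P_5$-free subgraph on $U\cup W$ identified above; when it does not, $U$ is dominated from inside $U$ while $V(P)$ is dominated by its neighbors outside $W$, and a similar split applies. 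The step I expect to be hardest is exactly the coupling through $W$: a vertex of $W$ can simultaneously dominate $V(P)$ and part of $U$, so the two halves of the minimization are not independent, and making the decomposition both correct and expressible over $P_5$-free induced subgraphs is where the detailed $fork$- and $(K_3+K_2)$-free case analysis is needed, together with a separate treatment of the long-path, $P_7$-containing regime of Lemma 16, in which $P$ itself cannot simply be enumerated.
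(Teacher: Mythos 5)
Your setup matches the paper's: reduce to a connected, quasi-twin-free graph, fix the path $P$ of the preceding lemma, and partition the rest of the graph into the set $U$ of vertices not dominated by $V(P)$, the set $W$ of vertices complete to $P$, and the remainder; the observation that $N(u)\subseteq U\cup W$ for $u\in U$ is indeed the right use of that lemma. From there, however, your plan diverges into a decomposition of $\gamma(G)$ as ``a constant plus domination numbers of $P_5$-free induced subgraphs,'' and this decomposition is never constructed; worse, in the branch where the optimal dominating set $D$ avoids $W$ it cannot work in the form described. In that branch $V(P)$ must still be dominated by many vertices drawn from $V(P)$ and its neighborhood, and the induced subgraph in which that subproblem lives contains the long induced path $P$ itself, so it is not $P_5$-free. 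You also never establish that $G[U\cup W]$ (or $G[U\cup W\cup V_2]$) is $P_5$-free --- your $fork$- and $(K_3+K_2)$-free observations constrain edges between $U$ and $W$ but do not exclude a long induced path through $W$. The coupling you yourself flag as ``hardest'' is precisely where the whole proof lives, and it is left undone.

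The paper takes a genuinely different and more economical route: instead of decomposing, it deletes a vertex $v$ of $P$ and shows $\gamma(G-v)=\gamma(G)$, then iterates until no induced $P_5$ remains. Concretely, if $V(P)$ dominates $G$ it bounds $\gamma(G)$ by a constant (at most $8$, or $5$ using one extra vertex adjacent to almost all of $P$), so brute force applies; otherwise, assuming $\gamma(G)\geq 11$ (small values again being brute-forceable), it defines $V_1\subseteq W$ as the vertices complete to $P$ with a neighbor in $U$ and proves, via an exchange argument using Ramsey's theorem on the independent set $D\cap V_3$, that $G-v$ has a minimum dominating set meeting $V_1$; such a set dominates all of $P$ for free, so the deletion preserves $\gamma$. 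This exchange statement is exactly what your $D\cap W=\emptyset$ branch is missing: to repair your approach you would have to prove it anyway, at which point the paper's deletion argument is both simpler and already complete. You would also still owe the separate analysis of the long-path ($P_7$-containing) regime, which you defer rather than resolve.
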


\begin{proof}
Let $G$ be a connected $\{fork,K_3+K_2\}$-free graph without quasi-twins containing an induced $P_5$. Let $P\triangleq (x_1,\ldots,x_k)$ be a maximum
induced path of $G$ if $G\in Free(\{P_7\})$, otherwise let $P$ be a maximal induced path of $G$ with at least seven vertices. It can be computed in polynomial time. Assume that $V(P)$ is a dominating set of $G$. If $|V(P)|\leq 8$, then $\gamma(G)\leq 8$.
Suppose that $|V(P)|\geq 9$ and $G$ is distinct to a simple path and a cycle. Hence, there is a vertex $x\in V(G)\setminus V(P)$. Since $G$ is $fork$-free and
$P$ is maximal, $x$ has at least two neighbors on $P$. If it has exactly two neighbors, then $x$ must be adjacent to the ends of $P$.
As $G$ is $\{fork,K_3+K_2\}$-free and it is not a cycle, an element of $V(G)\setminus V(P)$ has at least three neighbors on $P$.
We may assume that $x$ has at least three neighbors on $P$. Let $x_s$ be the first neighbor of $x$ counting from $x_1$. Clearly,
$s\leq 2$, otherwise $x,x_s,x_{s+1},x_{s-1},x_{s-2}$ or $x,x_s,x_{s+1},x_1,x_2$ induce a $fork$ or a $K_3+K_2$, respectively. If $s=2$, then
$N(x)\cap \{x_4,\ldots,x_k\}$ has at most two vertices, and they must be adjacent. It is impossible. If $s=1$ and $(x,x_2)\not \in E(G)$,
then $N(x)\cap \{x_4,\ldots,x_k\}$ is a clique with at most two vertices. It is also impossible. If $(x,x_1)$ and $(x,x_2)$ are edges of $G$, then
no two vertices of $V(P)\setminus (N(x)\cup \{x_3\})$ are adjacent. Hence, this set has at most two elements. In other words, $x$ is adjacent to at least
$|V(P)|-3$ vertices of $P$. Therefore, each element of $V(G)\setminus V(P)$ is either only adjacent to the ends of $P$ or to at least
$|V(P)|-3$ its vertices. Hence, $\{x_1,x_2,x_3,x_4,x\}$ is a dominating set of $G$.

Now, assume that $V(P)$ is not a dominating set of $G$ and $\gamma(G)\geq 11$. Let $V_1$ be the set of those elements in $\bigcup\limits_{v\in V(P)} N(v)\setminus V(P)$ that are adjacent to all vertices of $P$ and have a neighbor outside $\bigcup\limits_{v\in V(P)} N(v)$. By the previous lemma, $V_1$ is not empty. Let $V_2$ be the set of those elements in $\bigcup\limits_{v\in V(P)} N(v)\setminus V(P)$ that are not adjacent to all vertices of $P$, $V_3\triangleq V(G)\setminus \bigcup\limits_{v\in V(P)} N(v)$. By the previous lemma, none element of $V_2$ has a neighbor outside $\bigcup\limits_{v\in V(P)} N(v)$. It is easy to check that each element of $V_1$ is adjacent to every element of $V_2$. The set $V_3$ must be non-empty, otherwise an element of $V_1$ and an element of $V(P)$ constitute a dominating set of $G$. As $G$ is connected and $fork$-free, any element of $V_3$ has a neighbor in $V_1$.

Let $H$ be a graph obtained from $G$ by removing any vertex of $P$. We will show that there is a minimum dominating set of $H$ containing an element of $V_1$.
Hence, this set must be a dominating set of $G$. Therefore, $\gamma(H)=\gamma(G)$. Let $D$ be a minimum dominating set of $H$ containing no elements of $V_1$.
The set $D\cap \bigcup\limits_{v\in V(P)} N(v)$ has at most one element, otherwise any element of $V(P)$, any element of $V_1$, and $V_3\cap D$ form a dominating set of $H$. To avoid an induced $fork$ in $H$, $N(z')\cap V_1=N(z'')\cap V_1$ for any two vertices $z'\in V_3$ and $z''\in V_3$. We may consider that any element
$z^*\in D\cap V_3$ has a neighbor in $N(z^*)\setminus \bigcup\limits_{z\in D\cap A_3\setminus \{z^*\}} N(z)$, otherwise $D\setminus \{z^*\}\cup \{y^*\}$ is
a minimum dominating set of $G$, where $y^*\in V_1$ is an arbitrary neighbor of $z^*$. Indeed, any neighbor of $z^*$ in $V_3$ must be adjacent to $y^*$.
As $\gamma(G)\geq 11$, then $|V_3\cap D|\geq 9$. Let $V_3\cap D\triangleq \{z_1,\ldots,z_p\}$. For every $i$, there is a vertex $y_i\in V_1$ such that $y_i\in N(z_i)\setminus \bigcup\limits_{j=1,j\neq i}^{p} N(z_j)$. Hence, $V_3\cap D$ is independent. By Ramsey's theorem, among $y_1,\ldots y_p$ some three vertices
are pairwise non-adjacent or some four vertices form a clique. The first alternative is impossible, as $H$ is $fork$-free. Suppose that $y_1,y_2,y_3,y_4$ constitutes a clique of $H$. Let us show that $(D\setminus \{z_1,z_2,z_3,z_4\})\cup \{y_1,y_2,y_3,y_4\}$ is a dominating set of $H$. Clearly $V_3\cap \bigcup\limits_{i=1}^{4} N(z_i)\subseteq \bigcup\limits_{i=1}^{4} N(y_i)$. If $y\in V_1\setminus \{y_1,\ldots,y_p\}$ has a neighbor in $\{z_1,z_2,z_3,z_4\}$, then $y$ must have a neighbor in $\{y_1,y_2,y_3,y_4\}$, otherwise $H$ is not $K_3+K_2$.

So, deleting vertices of long induced paths in $\{fork,K_3+K_2\}$-free graphs gives a polynomial reduction to $\{P_5,fork,K_3+K_2\}$-free graphs.
\end{proof}

\section{Main result}

The following result was proved in \cite{M15_2}.

\begin{lemma}
The dominating set problem for a hereditary class ${\mathcal X}\subseteq Free(\{G+O_1\})$ can be polynomially
reduced to the same problem for ${\mathcal X}\cap Free(\{G\})$.
\end{lemma}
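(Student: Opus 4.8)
The plan is to exploit the combinatorial meaning of forbidding $G+O_1$: in any $\{G+O_1\}$-free graph, an induced copy of $G$ is automatically a dominating set, because a vertex lying outside the copy and missing it entirely would complete an induced $G+O_1$. This caps the domination number at the constant $|V(G)|$ for every instance that is not already $G$-free, so such instances can be settled by brute force; the remaining instances are precisely the $G$-free ones, on which the assumed subroutine applies.

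First I would split the input according to whether it contains $G$. Given $H\in {\mathcal X}$ and a bound $k$, I test in $O(|V(H)|^{|V(G)|})$ time whether $H$ has an induced copy of $G$; this is polynomial since $|V(G)|$ is a fixed constant. If $H$ has no induced $G$, then $H$ is $G$-free and hence $H\in {\mathcal X}\cap Free(\{G\})$, so I answer the instance with a single call to the assumed polynomial procedure for that subclass.

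Second, suppose $H$ contains an induced copy of $G$ on a vertex set $S$ with $|S|=|V(G)|$. The key observation is that $S$ dominates $H$: were some $v\in V(H)\setminus S$ to have no neighbor in $S$, then $S\cup\{v\}$ would induce $G+O_1$, contradicting ${\mathcal X}\subseteq Free(\{G+O_1\})$. Consequently $\gamma(H)\le |V(G)|$. Writing $c\triangleq |V(G)|$, I then compute $\gamma(H)$ exactly by enumerating, for each $t$ from $1$ to $c$, all $t$-element subsets of $V(H)$ and testing each for domination: there are $O(|V(H)|^{t})$ such subsets and each test costs $O(|V(H)|^{2})$, so the search runs in $O(|V(H)|^{c+2})$ time. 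Comparing $\gamma(H)$ with $k$ resolves the instance without any oracle call.

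Taken together, the two cases yield a polynomial-time reduction from the dominating set problem on ${\mathcal X}$ to the same problem on ${\mathcal X}\cap Free(\{G\})$, invoking the subroutine at most once. The only nontrivial ingredient is the domination bound of the second case; once one notices that forbidding $G+O_1$ forces every induced $G$ to dominate the whole graph, everything else is routine enumeration, so I expect the main (and rather mild) obstacle to be phrasing this structural observation cleanly rather than any genuine technical difficulty.
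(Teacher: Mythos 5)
Your argument is correct and complete: in a $\{G+O_1\}$-free graph any induced copy of $G$ is a dominating set, so either the input is $G$-free (hand it to the subroutine) or $\gamma(H)\le|V(G)|$ and brute force over subsets of constant size settles it in polynomial time. The paper itself gives no proof of this lemma --- it is quoted from \cite{M15_2} --- but your reduction is the natural one and there is nothing to object to.
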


Recall that the classes ${\mathcal S},{\mathcal T},{\mathcal Q},{\mathcal Q^*}$ defined in the third section are boundary for the dominating set problem.

\begin{thm}
Let ${\mathcal X}$ be defined by a set of forbidden induced subgraphs with at most five vertices. The dominating set problem
for ${\mathcal X}$ is NP-complete if it includes at least one of the classes ${\mathcal S},{\mathcal T},
{\mathcal Q}$. Otherwise, the problem can be solved in polynomial time for ${\mathcal X}$.
\end{thm}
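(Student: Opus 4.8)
The statement is a dichotomy theorem characterizing exactly when the dominating set problem is hard versus easy for a class ${\mathcal X}=Free({\mathcal Y})$ with $|V(G)|\leq 5$ for all $G\in {\mathcal Y}$. The hardness direction is immediate from Theorem~1: if ${\mathcal X}$ includes one of the boundary classes ${\mathcal S},{\mathcal T},{\mathcal Q}$, then the problem is NP-complete, since each of these is a boundary class for the problem. The entire content therefore lies in the easiness direction. The plan is to assume that ${\mathcal X}$ includes none of ${\mathcal S},{\mathcal T},{\mathcal Q}$ and to deduce, from this combinatorial exclusion, enough structural information about ${\mathcal Y}$ to invoke one of the polynomial-time reductions established in the auxiliary lemmas of Sections~4--6.

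\emph{First} I would translate the hypothesis ``${\mathcal X}\nsupseteq {\mathcal S},{\mathcal T},{\mathcal Q}$'' into a statement about the forbidden set ${\mathcal Y}$. Because each boundary class is hereditary, ${\mathcal X}\nsupseteq {\mathcal S}$ forces some $G\in {\mathcal Y}$ to be an induced subgraph of a member of ${\mathcal S}$, i.e.\ $G$ must itself be a forest with at most three leaves per component on at most five vertices; the analogous constraints for ${\mathcal T}$ and ${\mathcal Q}$ each force ${\mathcal Y}$ to contain a graph that is an induced subgraph of some $Q(H)$ or line graph, respectively. Combining the three exclusions simultaneously, ${\mathcal Y}$ must contain one graph from each of three explicitly enumerable lists of at-most-five-vertex graphs. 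I would enumerate these lists and, for each triple of choices, identify which single forbidden subgraph drives the argument. The key observation is that $P_5\notin {\mathcal Y}$ is handled separately by Theorem~2, so I may assume $P_5\notin {\mathcal Y}$ and that every graph in ${\mathcal Y}$ is $P_5$-free or $P_3+P_2$-free in the relevant reductions.

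\emph{Next} I would route each surviving case into the machinery of Sections~4--6. The basic reduction reduces domination in a $P_3+P_2$-free class to computing $\gamma''$ on irreducible graphs $G(A,B)$, and Lemmas~8--16 show that for the individual forbidden graphs $orb, K_5, gem, sinker, K_{1,4}, fork, cricket, bull, kite, dart$ the computation of $\gamma''$ is either directly polynomial (via $\gamma''_k$ for fixed $k$, Lemma~7) or reduces to a $\{P_5,\cdot\}$-free subproblem already known to be easy. For the classes forced to contain a forest I would invoke Lemma~17 (the $G+O_1$ reduction) to strip isolated vertices, and the $\{fork,K_3+K_2\}$ reduction of Lemmas~15--16 to clear long induced paths. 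The bulk of the proof is therefore a finite case analysis: for each admissible ${\mathcal Y}$ I must exhibit one forbidden subgraph $F\in {\mathcal Y}$ for which an auxiliary lemma applies, reducing ${\mathcal X}$ to a known-easy $P_5$-free class.

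\emph{The main obstacle} is the completeness and correctness of this case analysis: I must verify that the three exclusion constraints, taken together, genuinely force ${\mathcal Y}$ to contain a graph handled by some auxiliary lemma, with no triple slipping through uncovered. This requires carefully checking that the union of the graphs treated in Lemmas~8--16 together with the forest and line-graph cases exhausts every possibility, and confirming that no combination of exclusions leaves a residual class that is both ${\mathcal S},{\mathcal T},{\mathcal Q}$-free and outside the scope of every reduction. I expect the argument to proceed by fixing the smallest or structurally simplest forbidden graph in each admissible ${\mathcal Y}$ and dispatching it to its matching lemma, with the $P_5$-free subproblems resolved by Theorem~2 and prior literature on $\{P_5,F\}$-free classes.
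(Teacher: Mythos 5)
Your plan follows the same route as the paper: NP-completeness is immediate from Theorem~1, and polynomial solvability is obtained by translating the three non-inclusions into the presence in ${\mathcal Y}$ of small members of ${\mathcal S}$, ${\mathcal T}$, ${\mathcal Q}$ and then dispatching a finite case analysis to Theorem~2 and Lemmas~8--17. So the approach is the right one, and correctly identified.

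However, what you have written is a plan rather than a proof: it stops precisely where the paper's argument does its work. The substance you defer is (i) the observation that the forest forced into ${\mathcal Y}$ by ${\mathcal S}\nsubseteq{\mathcal X}$ is either an induced subgraph of $P_5+O_5$ (settled by Theorem~2 plus Lemma~17), or is $P_3+P_2$, or has a $K_{1,3}$/$fork$ component --- a clean three-way split rather than a product over ``triples of choices''; (ii) in the $P_3+P_2$ case, the enumeration showing that every at-most-five-vertex graph in ${\mathcal Q}$ (it can contain none of $C_4$, $C_5$, $2K_2$, $\overline{K_2+O_3}$, $\overline{K_3+O_2}$) is an induced subgraph of one of $P_4+O_5$, $K_5+O_5$, $orb+O_5$, $sinker+O_5$, $kite+O_5$, $dart+O_5$, $cricket+O_5$, $fork+O_5$, $K_{1,4}+O_5$, $gem+O_5$, $bull+O_5$, which is exactly what guarantees that Lemmas~8--14 cover all cases; and (iii) in the $fork$ case with no subgraph of $P_5$ in ${\mathcal Y}$, the use of ${\mathcal T}\nsubseteq{\mathcal X}$ to force a subgraph of a $hammer$ into ${\mathcal Y}$, handled by the known easiness of $Free(\{fork,bull\})$ and $Free(\{fork,hammer\})$ together with Lemma~16. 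Without (i)--(iii) the claim that ``no triple slips through uncovered'' is exactly the unproved assertion. Also note a small slip: ${\mathcal T}$ consists of line graphs and ${\mathcal Q}$ of the $Q(H)$ graphs, the reverse of the correspondence you wrote.
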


\begin{proof} Let ${\mathcal Y}$ be a minimal set such that ${\mathcal X}=Free({\mathcal Y})$. By Theorem 1,
the dominating set problem is NP-complete for ${\mathcal X}$ if it includes ${\mathcal S}$ or ${\mathcal T}$ or
${\mathcal Q}$. Assume that ${\mathcal S}\nsubseteq {\mathcal X},{\mathcal T}\nsubseteq {\mathcal X},{\mathcal Q}\nsubseteq {\mathcal X}$.
Hence, ${\mathcal Y}$ contains a forest. If ${\mathcal Y}$ contains an induced subgraph of $P_5+O_5$, then ${\mathcal X}$
is easy for the problem by Theorem 2 and Lemma 17. Suppose that $P_3+P_2$ belongs to ${\mathcal Y}$. Let $G$ be a graph in ${\mathcal Q}$ containing at most five vertices. The graph $G$ cannot contain $C_4,C_5,2K_2$, and the complements of $K_2+O_3,K_3+O_2$ as an induced subgraph. Taking into account a list of all five-vertex graphs, it is easy to check that $G$ is an induced subgraph of one of the graphs $P_4+O_5,K_5+O_5,orb+O_5,sinker+O_5,kite+O_5,dart+O_5,cricket+O_5,fork+O_5,K_{1,4}+O_5,gem+O_5,bull+O_5$.
By Lemmas 8-14 and 17, the problem for ${\mathcal X}$ can be polynomially reduced to the same problem for
the classes $Free(\{P_5,G\})$ and $Free(\{P_5,fork\})$. Hence, by Theorem 2, ${\mathcal X}$ is easy.
Assume that $fork\in {\mathcal Y}$ and ${\mathcal Y}$ does not contain a subgraph of $P_5$. Hence,
${\mathcal Y}$ contains a graph $H\in {\mathcal T}$, which is a subgraph of a $hammer$. The classes
$Free(\{fork,bull\})$ and $Free(\{fork,hammer\})$ are easy for the problem \cite{M15_2}. By these facts,
Lemmas 16 and 17, and Theorem 2, the problem is polynomial-time solvable for ${\mathcal X}$.
\end{proof}\\

There is an interesting detail concerning the previous theorem. Namely, textual replacing five to six
leads to an incorrect statement. Indeed, none of the classes  ${\mathcal S},{\mathcal T},{\mathcal Q}$ is
contained in $Free(\{K_{1,4},P_3+P_3\})$. Hence, it should be an easy case for the dominating set problem
assuming the correctness of those fact. But, by Theorem 1, the class is a hard case for the problem, since
$Free(\{K_{1,4},P_3+P_3\})\supseteq {\mathcal Q^*}$.

\section*{Acknowledgements}

The article was prepared within the framework of the Academic Fund Program at the
National Research University Higher School of Economics (HSE) in 2015--2016 (grant
15-01-0010)  and supported within the framework of a subsidy granted to the HSE by
the Government of the Russian Federation for the implementation of the Global
Competitiveness Program.

\end{document}